\title[AAMAS-2026 Formatting Instructions]{Networked Communication for Mean-Field Games with Function Approximation and Empirical Mean-Field Estimation}
\author{Patrick Benjamin}
\affiliation{
  \institution{University of Oxford}
  \city{Oxford}
  \country{United Kingdom}}
\email{phlbenjamin@gmail.com}
\author{Alessandro Abate}
\affiliation{
  \institution{University of Oxford}
  \city{Oxford}
  \country{United Kingdom}}
\email{alessandro.abate@cs.ox.ac.uk}
\begin{abstract}
Recent algorithms allow decentralised agents, possibly connected via a communication network, to learn equilibria in mean-field games from a non-episodic run of the empirical system. However, these algorithms are for tabular settings: this computationally limits the size of agents’ observation space, meaning the algorithms cannot handle anything but small state spaces, nor generalise beyond policies depending only on the agent’s local state to so-called ‘population-dependent’ policies. We address this limitation by introducing function approximation to the existing setting, drawing on the Munchausen Online Mirror Descent method that has previously been employed only in finite-horizon, episodic, centralised settings. While this permits us to include the mean field in the observation for players’ policies, it is unrealistic to assume decentralised agents have access to this global information: we therefore also provide new algorithms allowing agents to locally estimate the global empirical distribution, and to improve this estimate via inter-agent communication. We prove theoretically that exchanging policy information helps networked agents outperform both independent and even centralised agents in function-approximation settings. Our experiments demonstrate this happening empirically, and show that the communication network allows decentralised agents to estimate the mean field for population-dependent policies.
\end{abstract}
\keywords{Mean-Field Games, Communication Networks, Reinforcement Learning}
\newcommand{\BibTeX}{\rm B\kern-.05em{\sc i\kern-.025em b}\kern-.08em\TeX}
\theoremstyle{plain}
\newtheorem{theorem}{Theorem}[section]
\theoremstyle{definition}  
\newtheorem{definition}[theorem]{Definition}
\newtheorem{assumption}[theorem]{Assumption}
\newtheorem{remark}[theorem]{Remark}
\begin{document}


\pagestyle{fancy}
\fancyhead{}


\maketitle 

\section{Introduction}\label{introduction}

The mean-field game (MFG) framework \citep{lasry2007mean,huangMFG} can be used to circumvent the difficulty faced by multi-agent reinforcement learning regarding computational scalability as the number of agents grows \citep{yardim2024exploiting,zeng2024single}.\footnote{Preliminary versions of this paper have been accepted at: MARW at AAAI'25; GAAMAL at IJCAI'25; MALTA at AAAI'25; ALA at AAMAS’25; EMAS at AAMAS’25; GAIW at AAMAS’25; OptLearnMAS-25 at AAMAS’25; RLDM’25.} It models a representative agent as interacting not with other individual agents in the population on a per-agent basis, but instead with a distribution over the other agents, called the \textit{mean field}. The MFG framework analyses the limiting case when the population consists of an infinite number of symmetric and anonymous agents, that is, they have identical reward and transition functions which depend on the mean-field distribution rather than on the actions of specific other players. The solution to this game is the mean-field Nash equilibrium (MFNE), which can be used as an approximation for the Nash equilibrium (NE) in a finite-agent game (which is harder to solve in itself), with the error in the solution reducing as the number of agents \textit{N} tends to infinity \citep{finite_proofs,Anahtarci2020QLearningIR,yardim2024mean,TOUMI2024111420,hu2024mfoml,chen2024periodic}.  MFGs have thus been applied to a wide range of real-world problems: see \citet{survey_learningMFGs} for examples.


Recent works argue that classical algorithms for solving MFGs rely on assumptions and methods that are likely to be undesirable in real-world applications (e.g. swarm robotics, autonomous vehicles), emphasising that desirable qualities for practical MFG algorithms include: learning from the empirical distribution of $N$ agents (i.e. this distribution is generated only by the policies of the agents, rather than being updated by the algorithm itself or an external oracle/simulator); learning online from a single, non-episodic system run (i.e. similar to above, the population cannot be arbitrarily reset by an external controller); model-free learning; decentralised learning; and fast practical convergence \citep{policy_mirror_independent,benjamin2024networked}. While these works address these desiderata, they do so only in settings in which the state and action spaces are small enough that the Q-function can be represented by a table, limiting their approaches' scalability. 

Moreover, in those works, as in many others on MFGs, agents only observe their local state as input to their Q-function (which defines their policy). This is sufficient when the solved MFG is expected to have a stationary distribution (`stationary MFGs') \citep{survey_learningMFGs,one_that_may_sample,Anahtarci2020QLearningIR,zaman2023oraclefree,policy_mirror_independent,benjamin2024networked}. However, 
in reality there are numerous reasons why agents may benefit from being able to respond to the current distribution (discussed further in Appx. \ref{Further_discussion_non_stationary_equilibria}). Recent work has thus increasingly focused on these more general settings where it is necessary for agents to have so-called `master policies' (a.k.a. population-dependent policies) which depend on both the mean-field distribution and their local state \citep{cardaliaguet2015masterequationconvergenceproblem,carmona_common,perrin2022generalization,wu2024populationaware,survey_learningMFGs,scalable_deep}.

The distribution is a large, high-dimensional observation object, taking a continuum of values. Therefore a population-dependent Q-function cannot be represented in a table and must be approximated. To address these limitations while maintaining the desiderata for real-world applications given in recent works, we introduce function approximation to the MFG setting of decentralised agents learning online from a single, non-episodic run of the empirical system, allowing this setting to handle larger state spaces and to accept the mean-field distribution as an observation input. 
To overcome the difficulties of training non-linear approximators in this context, we use the so-called `Munchausen' trick, introduced first for single-agent RL \citep{NEURIPS2020_2c6a0bae}, extended to MFGs \citep{scalable_deep}, and then to MFGs with population-dependent policies \citep{wu2024populationaware}.

We particularly explore this in the context of networked communication between decentralised agents. Almost all prior work relies on a centralised node to learn on behalf of all the agents. In this context `centralised' does not imply global observability of the whole population's actions - which would generally make computation infeasible given the complexity of the problem - but rather that learning is only conducted from the samples of a single representative agent, whose policy updates are assumed to be automatically pushed to the rest of the population by the central node. Therefore to reduce confusion, we sometimes refer to `central-agent learning' instead of `centralised learning' in contrast to prior works. More recent works have recognised that the assumption of a central learner might be be unrealistic in the real world, as well as a computational bottleneck and a vulnerable single point of failure \citep{benjamin2024networked, policy_mirror_independent}. 

We demonstrate that networked communication brings two specific benefits over the purely independent setting, while also removing the undesirable assumption of a central learner. Firstly, when the Q-function is approximated rather than exact, some updates lead to better performing policies than others. Allowing networked agents to propagate better performing policies through the population leads to faster learning than in the purely independent case and very often even than in the central-agent case, as we show both theoretically and empirically (this method is reminiscent of the use of fitness functions in distributed evolutionary algorithms \citep{Eiben2015,survivability}, and similarly of `population-based training' \citep{jaderberg2017population}). Secondly, we argue that in the real world it is unrealistic to assume that decentralised agents, endowed with local state observations and limited (if any) communication radius, would be able to observe the global mean-field distribution and use it as input to their Q-functions / policy. We therefore further contribute two setting-dependent algorithms by which decentralised agents can estimate the global distribution from local observations, and further improve their estimates by communication with neighbours. 

We focus on `coordination' games, where agents can increase their individual rewards by following the same strategy as others and therefore have an incentive to communicate policies, even if the MFG setting itself is technically non-cooperative. Thus our work can be applied to real-world problems in e.g. traffic signal control, robotic swarm formation control, vehicle platooning, and consensus and synchronisation e.g. for sensor networks \citep{10757965}.\footnote{We further preempt concerns about 
communication in competitive settings by wondering whether self-interested agents would be any more likely to want to obey a central learner as has usually been assumed. Moreover we show that self-interested communicating agents can obtain higher returns than independent agents even in non-coordination games (Fig. \ref{diffuse100}), indicating that they do have incentive to communicate.} In summary, our contributions are:

\begin{itemize}
    \item We introduce, for the first time, function approximation to MFG settings with decentralised agents
    . To do this: 
    \begin{itemize}
        \item We use Munchausen RL for the first time in an infinite-horizon MFG context 
        (cf. finite-horizon \citep{scalable_deep,wu2024populationaware}).
        \item This constitutes the first use of function approximation for solving MFGs from a single, non-episodic run of the empirical system 
        (cf. tabular settings \citep{policy_mirror_independent,benjamin2024networked}).
    \end{itemize}
    \item Function approximation allows us to explore larger state spaces, and also settings where agents' policies depend on the mean-field distribution as well as their local state. 
    \item Instead of assuming that agents have access to this global knowledge as in prior works, we present two additional novel algorithms allowing decentralised agents to locally estimate the empirical distribution 
    and to improve these estimates by inter-agent communication.
    \item We prove theoretically that networked agents can learn faster than even central-agent populations in the function-approximation setting. 
    \item We support this with extensive experiments, where our results demonstrate the two benefits of {the decentralised communication scheme, which significantly outperforms both the independent and central-agent architectures}.
\end{itemize}

The paper is structured as follows. Related work is given in Sec. \ref{related_work_section}. We give preliminaries in Sec. \ref{preliminaries_section} and our core learning and policy-improvement algorithm in Sec. \ref{learning_and_policy_improvement}. We present our mean-field estimation and communication algorithms in Sec. \ref{estimation}, theoretical results in Sec. \ref{theoretical_results} and experiments in Sec. \ref{experiments_section}.

\section{Related work}\label{related_work_section}
We refer the reader to \citet{benjamin2024networked} for detailed discussion around the setting of networked communication in MFGs, and to \citet{survey_learningMFGs} for a broader survey of MFGs. Our work is most closely related to \citet{benjamin2024networked}, which introduced networked communication to the infinite-horizon MFG setting
. However, this work focuses only on  tabular settings rather than using function approximation as in ours, and only addresses population-independent policies. 

\citet{scalable_deep} uses Munchausen Online Mirror Descent (MOMD), similar to our method for learning with neural networks, but in a different setting: 
their mean-field distribution is updated in an exact way and an oracle supplies a central learner with rewards and transitions for it to learn a population-independent policy, in a finite-horizon, episodic setting. \citet{wu2024populationaware} uses MOMD to learn population-dependent policies, albeit also with a central-agent method that exactly updates the mean-field distribution in a finite-horizon episodic setting. \citet{perrin2022generalization} learns population-dependent policies with function approximation in infinite-horizon settings like our own, but does so in a central-agent, two-timescale manner without using the empirical mean-field distribution. \citet{zhang2024stochastic} also uses function approximation along a non-episodic path, but involves a generic central agent learning an estimate of the mean field rather than using an empirical population. Approaches that directly update an estimate of the mean field must be able to generate rewards from this arbitrary mean field, even if they otherwise claim to be oracle-free. They are thus inherently centralised algorithms and rely on strong assumptions that may not apply in real-world problems. Conversely, we are interested in practical convergence in online, deployed settings, where the reward is computed from the empirical finite population.

\citet{subjective_equilibria} addresses decentralised learning from a continuous, non-episodic run of the empirical system using either full or compressed information about the mean field, but agents are assumed to receive this information directly, rather than estimating it locally as in the algorithm we now present. They also do not consider function approximation or inter-agent communication
. In the closely related but distinct area of mean-field RL, \citet{pomfrl} does estimate the empirical mean-field distribution from the local neighbourhood, however agents are seeking to estimate the mean action rather than the mean-field distribution over states as in our MFG setting. Their agents also do not have access to a communication network by which they can improve their estimates. 


\section{Preliminaries}\label{preliminaries_section}

\subsection{Mean-field games}

We use the following notation. $N$ is the number of agents in a population, with $\mathcal{S}$ and $\mathcal{A}$ representing the finite state and common action spaces, respectively. The set of probability measures on a finite set $\mathcal{X}$ is denoted $\Delta_\mathcal{X}$, and $\mathbf{e}_x \in \Delta_\mathcal{X}$ for $x \in \mathcal{X}$ is a one-hot vector with only the entry corresponding to $x$ set to 1, and all others set to 0. For time $t \geq 0$, $\hat{\mu}_t$ = $\frac{1}{N}\sum^N_{i=1}\sum_{s\in\mathcal{S}}$ $\mathds{1}_{s^i_t=s}\mathbf{e}_s$ $\in$ $\Delta_\mathcal{S}$ is a vector  of length $|\mathcal{S}|$ denoting the empirical categorical state distribution of the $N$ agents at time $t$. For agent $i\in1\dots N$, $i$'s policy at time $t$ depends on its observation $o^i_t$. We explore three different forms that this observation object can take:

\begin{itemize}
    \item In the conventional setting, the observation is simply $i$'s current local state $s^i_t$, such that $\pi^i(a|o^i_t) = \pi^i(a|s^i_t)$.
    \item When the policy is population-dependent, if we assume perfect observability of the global mean-field distribution then we have $o^i_t = (s^i_t, \hat{\mu}_t)$. 
    \item It is unrealistic to assume that decentralised agents with a possibly limited communication radius can observe the global mean field, so we allow agents to form a local estimate $\tilde{\hat{\mu}}^i_t$ which can be improved by communication with neighbours. Here we have $o^i_t = (s^i_t, \tilde{\hat{\mu}}^i_t)$. 
\end{itemize}

In the following definitions we focus on the population-dependent case when $o^i_t = (s^i_t, \hat{\mu}_t)$, and {clarify afterwards the connection to the other observation cases}. Thus the set of policies is $\Pi$ = \{$\pi$ : $\mathcal{S} \times \Delta_\mathcal{S} \rightarrow \Delta_\mathcal{A}$\}, {and the set of Q-functions is denoted $\mathcal{Q} = \{q : \mathcal{S} \times \Delta_\mathcal{S} \times \mathcal{A} \rightarrow \mathbb{R}\}$}.   

\begin{definition}[\textit{N}-player symmetric anonymous games] 
An N-player stochastic game with symmetric, anonymous agents is given by the tuple $\langle$$N$, $\mathcal{S}$, $\mathcal{A}$, $P$, $R$, $\gamma$$\rangle$, where $\mathcal{A}$ is the action space, identical  for each agent; $\mathcal{S}$ is the identical  state space of each agent, such that 
{their initial states are \{$s^i_0$\}$_{i=1}^N \in \mathcal{S}^N$} and their policies are \{$\pi^i$\}$_{i=1}^N \in \Pi^N$. $P$ : $\mathcal{S}$ $\times$ $\mathcal{A}$ $\times$ $\Delta_{\mathcal{S}}$ $\rightarrow$ $\Delta_{\mathcal{S}}$ is the transition function and $R$ : $\mathcal{S}$ $\times$ $\mathcal{A}$ $\times$ $\Delta_{\mathcal{S}}$ $\rightarrow$ [0,1] is the reward function,  which map each agent's local state and action and the population's empirical distribution to transition probabilities and bounded rewards, respectively, i.e.:
\[s^i_{t+1} \sim P(\cdot|s^i_{t},a^i_{t},\hat{\mu}_t), \;\;\; r^i_{t} = R(s^i_{t},a^i_{t},\hat{\mu}_t)\;\;\;\;\;\;\;\;\;\forall i = 1,\dots,N.\] 
\end{definition}

At the limit as $N \rightarrow \infty$, the infinite population of agents can be characterised as a limit distribution $\mu \in \Delta_\mathcal{S}$; the infinite-agent game is termed an MFG. The so-called `mean-field flow' $\boldsymbol\mu$ is given by the infinite sequence of mean-field distributions s.t.  $\boldsymbol\mu = $$(\mu_t)_{t\geq0}$.

\begin{definition}[Induced mean-field flow] 
We denote by $I(\pi)$ the mean-field flow $\boldsymbol\mu$ induced when all the agents follow $\pi$, where this is generated from $\pi$ as follows:
\[\mu_{t+1} (s') = \sum_{s,a}\mu_t (s)\pi (a|s,\mu_t)P(s'|s,a,\mu_t).\] \end{definition}

When the mean-field flow is stationary such that the distribution is the same for all $t$, i.e. $\mu_t=\mu_{t+1}$ $\forall t \geq 0$, the policy $\pi^i(a|s^i_t, 
{\mu}_t)$ need not depend on the distribution, such that $\pi^i(a|s^i_t, 
{\mu}_t)$ = $\pi^i(a|s^i_t)$, i.e. we recover the classical population-independent policy. 
However, for such a population-independent policy the initial distribution $\mu_0$ must be known and fixed in advance, whereas otherwise it need not be. We also give the following definitions.

\begin{definition}[Mean-field discounted return]\label{Mean_field_discounted_return}
In a MFG where all agents follow policy $\pi$ giving a mean-field flow $\boldsymbol\mu = $$(\mu_t)_{t\geq0}$, the expected discounted return of the representative agent is given by
\[V(\pi,\boldsymbol\mu) = \mathbb{E}\left[
\sum^{\infty}_{t=0}\gamma^{t}(R(s_t,a_t,\mu_t))\bigg|
\substack{s_0\sim\mu_0\\ 
a_t\sim \pi(\cdot|s_t,\mu_t)\\ s_{t+1} \sim P(\cdot|s_{t},a_{t},{\mu_t})}\right]. 
\]
\end{definition}

\begin{definition}[Best-response (BR) policy] 
A policy $\pi^*$ is a \textit{best response (BR)} against the mean-field flow $\boldsymbol\mu$ if it maximises the discounted return  $V(\cdot,\boldsymbol\mu)$; the set of these policies is denoted $BR(\boldsymbol\mu)$: 
\[\pi^* \in BR(\boldsymbol\mu) := \mathop{\arg\max}\limits_{\pi}V(\pi,\boldsymbol\mu).\]
\end{definition}

\begin{definition}[Mean-field Nash equilibrium (MFNE)]
A pair ($\pi^*, \boldsymbol\mu^*)$ is a mean-field Nash equilibrium (MFNE) if the following two conditions hold:
\begin{itemize}
    \item $\pi^*$ is a best response to $\boldsymbol\mu^*$, i.e. $\pi^* \in BR(\boldsymbol\mu^*)$;
    \item $\boldsymbol\mu^*$ is induced by $\pi^*$, i.e.  $\boldsymbol\mu^*$ =  $I(\pi^*)$.
\end{itemize}
\end{definition}
$\pi^*$ is thus a fixed point of the map $BR\circ I$, i.e. $\pi^* \in BR(I(\pi^*))$. If a population-dependent policy is a MFNE policy for any initial distribution $\mu_0$, it is a `master policy'.

Previous works have shown that, in tabular settings, it is possible for a finite population of decentralised agents (each of which is permitted to have a distinct population-independent policy $\pi^i$) to learn the MFNE using only the empirical distribution $\hat{\mu}_t$, rather than the exactly calculated infinite flow $\boldsymbol\mu$ \citep{policy_mirror_independent,benjamin2024networked}. This MFNE may be the goal in itself, or it can in turn serve as an approximate NE for the harder-to-solve game involving the finite population. In this work we provide algorithms to perform this process in non-tabular and population-dependent settings, and demonstrate them empirically.

\subsection{(Munchausen) Online Mirror Descent}

Instead of finding a $BR$ at each iteration, which is computationally expensive, we can use a form of policy iteration for MFGs called Online Mirror Descent (OMD). This begins with an initial policy $\pi_0$, and then at each iteration $k$, evaluating the current policy $\pi_k$ with respect to its induced mean-field flow $\boldsymbol{\mu} = I(\pi_k)$ to compute its Q-function $Q_{k+1}$. To stabilise the learning process, we then use a weighted sum over this and past Q-functions, and set $\pi_{k+1}$ to be the softmax over this weighted sum, i.e.  $\pi_{k+1}(\cdot|s,\mu) = softmax\left(\frac{1}{\tau_q}\sum^k_{\kappa=0}{Q_\kappa}(s,\mu,\cdot)\right)$. $\tau_{q}$ is a temperature parameter that scales the entropy in Munchausen RL \citep{NEURIPS2020_2c6a0bae}; note that this is a different temperature to the one agents use when selecting which communicated parameters to adopt, denoted $\tau^{comm}_k$ (Sec. \ref{policy_adoption_and_communication}).

If the Q-function is approximated non-linearly using neural networks, it is difficult to compute this weighted sum. The so-called `Munchausen trick' addresses this by computing a single Q-function that mimics the weighted sum 
 using implicit regularisation based on the Kullback-Leibler (KL) divergence between $\pi_{k}$ and $\pi_{k+1}$ \citep{NEURIPS2020_2c6a0bae}. Using this reparametrisation gives Munchausen OMD (MOMD), detailed further in Sec. \ref{q_update} \citep{scalable_deep,wu2024populationaware}. MOMD does not bias the MFNE, and has the same convergence guarantees as OMD \citep{hadikhanloo2017learninganonymousnonatomicgames,perolat2021scaling,wu2024populationaware}.

\subsection{Networks}\label{preliminaries_networks}

We conceive of the finite population as exhibiting two time-varying networks
. The basic definition of such a network is:

\begin{definition}[Time-varying network] 
The network ($\mathcal{G}_t$)$_{t\geq0}$ is given by $\mathcal{G}_t$ = ($\mathcal{N}, \mathcal{E}_t$), where $\mathcal{N}$ is the set of vertices representing agents $i \in \{1,\dots,N\}$, and the edge set $\mathcal{E}_{t}$ $\subseteq$ \{(\textit{i},\textit{j}) : \textit{i},\textit{j} $\in$ $\mathcal{N}$, \textit{i} $\neq$ \textit{j}\} is the set of undirected links present at time \textit{t}. A network's \textit{diameter} $d_{\mathcal{G}_t}$ is the maximum of the shortest path length between any pair of nodes.
\end{definition}

One of these networks $\mathcal{G}^{comm}_t$ defines which agents can communicate information to each other at time $t$. 
The second network $\mathcal{G}^{obs}_t$ is a graph defining which agents can observe each other's states, which we use in general settings for estimating the mean-field distribution from local information. The structure of the two networks may be identical (e.g. if embodied agents can both observe the position (state) of, and exchange information with, other agents within a certain physical distance from themselves), or different (e.g. if agents can observe the positions of nearby agents, but only exchange information with agents by which they are linked via satellite, which may connect agents over long distances).

We also define an alternative version of the observation graph that is useful in a specific subclass of environments, which can most intuitively be thought of as those where agents' states are positions in physical space. When this is the case, we usually think of agents' ability to observe each other as depending more abstractly on whether states are visible to each other. This visibility graph is:

\begin{definition}[Time-varying state-visibility graph] 
The state visibility graph ($\mathcal{G}^{vis}_t$)$_{t\geq0}$ is given by $\mathcal{G}^{vis}_t$ = ($\mathcal{S'}, \mathcal{E}^{vis}_t$), where $\mathcal{S'}$ is the set of vertices representing the environment states $\mathcal{S}$, and the edge set $\mathcal{E}^{vis}_{t}$ $\subseteq$ \{(\textit{m},\textit{n}) : \textit{m},\textit{n} $\in$ $\mathcal{S'}$\} is the set of undirected links present at time \textit{t}, indicating which states are visible to each other.
\end{definition}

We say an agent in $s$ can obtain a count of the number of agents in $s'$ if $s'$ is visible to $s$. The benefit of this graph $\mathcal{G}^{vis}_t$ over $\mathcal{G}^{obs}_t$ is that there is mutual exclusivity: either an agent in state $s$ is able to obtain a total count of all of the agents in state $s'$ (if $s'$ is visible to $s$), or it cannot obtain information about any agent in state $s'$ (if those states are not visible to each other). Additionally, this graph permits an agent in state $s$ to observe that there are \textit{no} agents in state $s'$ as long as $s'$ is visible to $s$. These benefits are not available if the observability graph is defined strictly between agents as in $\mathcal{G}^{obs}_t$, such that using $\mathcal{G}^{vis}_t$ facilitates more efficient estimation of the global mean-field distribution from local information in settings where $\mathcal{G}^{vis}_t$ is applicable (see Sec. \ref{estimation}).


\begin{algorithm}[!tb]
    \caption{Networked learning with 
    function approximation} \label{networked_algorithm_neural_networks}
    \begin{algorithmic}[1]

    \REQUIRE loop parameters $K, M, L, E, C_p$;    learning parameters
    $\gamma, \tau_q, |B|, 
    cl, \nu$, $\{\tau^{comm}_k\}_{k \in \{0,\dots,K-1\}}$; initial states \{$s^i_0$\}$_{i=1}^{N}$; $t \leftarrow 0$ 
    \STATE $\forall i$ : Randomly initialise parameters $\theta^i_0$ of Q-networks $\check{Q}_{\theta^i_0}(o,\cdot)$, and set $\pi^i_0(a|o)= $ softmax$\left(\frac{1}{\tau_q}\check{Q}_{\theta^i_0}(o,\cdot)\right)(a)$\label{initalise_line}
    \FOR{$k = 0,\dots,K-1$}
        \STATE $\forall i$: Empty $i$'s buffer\label{empty_buffer}
        \FOR{$m = 0,\dots,M_{}-1$}\label{start_m_loop}
            \STATE Take step $\forall i : a^i_t \sim \pi^i_{k}(\cdot|o^i_t), r^i_{t} = R(s^i_{t},a^i_{t},\hat{\mu}_t),s^i_{t+1} \sim P(\cdot|s^i_{t},a^i_{t},\hat{\mu}_t)$; $t \leftarrow t + 1$\label{step1}
            \STATE $\forall i$: Add $\zeta^i_{t}$ to $i$'s buffer\label{store_transition}   
        \ENDFOR\label{end_m_loop}
        \FOR{$l = 0,\dots,L-1$}\label{beginlearningwithbuffer}
            \STATE $\forall i : $ Sample batch $B^i_{k,l}$ from $i$'s buffer\label{sample_from_buffer}
            \STATE Update $\theta$ to minimise $\hat{\mathcal{L}}(\theta,\theta')$ as in Def. \ref{loss}\label{minimise_loss_line}
            \STATE If $l\mod$ $\nu = 0$, set $\theta'\leftarrow\theta$ \label{update_target_line}
        \ENDFOR \label{endlearningwithbuffer}
        \STATE $\check{Q}_{\theta^i_{k+1}}(o,\cdot) \leftarrow \check{Q}_{\theta^i_{k,L}}(o,\cdot)$ 
        \STATE $\forall i$ :  $\pi^i_{k+1}(a|o) \gets $ softmax$\left(\frac{1}{\tau_q}\check{Q}_{\theta^i_{k+1}}(o,\cdot)\right)(a)$\label{policy_update_line}
        \STATE $\forall i :$ $\sigma^i_{k+1} \gets 0$\label{start_eval_line}
        \FOR{$e = 0,\dots,E-1$ evaluation steps}
        \STATE Take step $\forall i : a^i_t \sim \pi^i_{k}(\cdot|o^i_t), r^i_{t} = R(s^i_{t},a^i_{t},\hat{\mu}_t),s^i_{t+1} \sim P(\cdot|s^i_{t},a^i_{t},\hat{\mu}_t)$\label{step2}
        \STATE $\forall i :$ $\sigma^i_{k+1} = \sigma^i_{k+1} + \gamma^{e}\cdot r^i_{t}$ 
        \STATE $t \leftarrow t + 1$
        \ENDFOR\label{end_eval_line}
        \FOR{$C_p$ rounds}\label{start_comm}
            \STATE $\forall i :$ Broadcast $\sigma^i_{k+1}, \pi^{i}_{k+1}$\label{brodcast_line}
            \STATE $\forall i : J^i_t \leftarrow i \cup \{j \in \mathcal{N} : (i,j) \in \mathcal{E}^{comm}_{t}$\}\label{receive_line}
            \STATE $\forall i:$ Select $\mathrm{adopted}^i \sim$ Pr$\left(\mathrm{adopted}^i = j\right)$ = $\frac{\exp{(\sigma^j_{k+1}}/\tau^{comm}_k)}{\sum_{x\in J^i_t}\exp{(\sigma^x_{k+1}}/\tau^{comm}_k)}$ $\forall j \in J^i_t$ \label{softmax_adoption_prob}
            \STATE $\forall i : \sigma^i_{k+1} \leftarrow \sigma^{\mathrm{adopted}^i}_{k+1}, \pi^i_{k+1} \leftarrow \pi^{\mathrm{adopted}^i}_{k+1}$\label{adopt_line}
            \STATE Take step $\forall i : a^i_t \sim \pi^i_{k}(\cdot|o^i_t), r^i_{t} = R(s^i_{t},a^i_{t},\hat{\mu}_t),s^i_{t+1} \sim P(\cdot|s^i_{t},a^i_{t},\hat{\mu}_t)$; $t \leftarrow t + 1$\label{step3}
            \ENDFOR\label{end_comm}
    \ENDFOR
    \RETURN policies \{$\pi^i_K$\}$_{i=1}^{N}$
    \end{algorithmic}
\end{algorithm}

\section{Learning and policy improvement}\label{learning_and_policy_improvement}

\subsection{Q-network and update}\label{q_update}

 Lines \ref{initalise_line}-\ref{policy_update_line} of our novel Alg. \ref{networked_algorithm_neural_networks} contain the core Q-function/policy update method. Agent $i$ has a neural network parametrised by $\theta^i_k$ to approximate its Q-function: $\check{Q}_{\theta^i_k}(o,\cdot)$. The agent's policy is given by $\pi_{\theta^i_k}(a|o)= $ softmax$\left(\frac{1}{\tau_{q}}\check{Q}_{\theta^i_k}(o,\cdot)\right)(a)$.  We denote the policy $\pi^i_{k}(a|o)$ for simplicity when appropriate. Each agent maintains a buffer (of size $M$) of collected transitions of the form $\left(o^i_t,a^i_t,r^i_t,o^i_{t+1}\right)$. At each iteration $k$, they empty their buffer (Line \ref{empty_buffer}) before collecting $M$ new transitions (Lines \ref{start_m_loop}-\ref{end_m_loop}); each decentralised agent $i$ then trains its Q-network $\check{Q}_{\theta^i_k}$ via $L$ training updates as follows (Lines \ref{beginlearningwithbuffer}-\ref{endlearningwithbuffer}). For training purposes, $i$ also maintains a target network $\check{Q}_{\theta^{i,'}_{k,l}}$ with the same architecture but parameters $\theta^{i,'}_{k,l}$ copied from $\theta^{i}_{k,l}$ less regularly than $\theta^{i}_{k,l}$ themselves are updated, i.e. only every $\nu$ learning iterations (Line \ref{update_target_line}). At each iteration $l$, the agent samples a random batch $B^i_{k,l}$ of $|B|$ transitions from its buffer (Line \ref{sample_from_buffer}),
and trains its neural network via stochastic gradient descent to minimise the empirical loss (Def. \ref{loss}, Line \ref{minimise_loss_line}). For $cl < 0$, $[\cdot]^0_{cl}$ is a clipping function used to prevent numerical issues if the policy is too close to deterministic, as the log-policy term is otherwise unbounded \citep{NEURIPS2020_2c6a0bae,wu2024populationaware}:

\begin{definition}[Empirical loss for Q-network]\label{loss}
For the target $T$:
\begin{align*}
    T = r_t &+ 
    [\tau_{q}\ln\pi_{\theta^{i,'}_{k,l}}(a_t|o_t)]^0_{cl}  \\ &+ \gamma\sum_{a\in\mathcal{A}}\pi_{\theta^{i,'}_{k,l}}(a|o_{t+1})\left(\check{Q}_{\theta^{i,'}_{k,l}}(o_{t+1},a) - \tau_{q}\ln\pi_{\theta^{i,'}_{k,l}}(a|o_{t+1})\right),
\end{align*}
\[\text{the loss is:} \;\;\;\;\;\;\;\;\;\; \hat{\mathcal{L}}(\theta,\theta') = \frac{1}{|B|}\sum_{transition \in B^i_{k,l}}\left|\check{Q}_{\theta^i_{k,l}}(o_t,a_t) - T\right|^2.\] 
\end{definition}

\subsection{Communication and adoption of parameters}\label{policy_adoption_and_communication}

We use the communication network $\mathcal{G}^{comm}_t$ to share two types of information at different points in Alg \ref{networked_algorithm_neural_networks}. One is used to improve local estimates of the mean field (Sec. \ref{estimation}). The other, described here, is used to privilege the spread of better performing policy updates through the population, allowing faster convergence in this networked case than in the independent and even central-agent cases.

We adapt \citet{benjamin2024networked} for the function-approximation case, where in our work agents broadcast the parameters of the Q-network that defines their policy, rather than the Q-function table. At each iteration $k$, after independently updating their Q-network and policy (Lines \ref{empty_buffer}-\ref{policy_update_line}), agents \textit{estimate} the infinite discounted return (Def. \ref{Mean_field_discounted_return}) of their new policies by collecting rewards for $E$ steps, and assign the finite-step discounted sum to $\sigma^i_{k+1}$ (Lines \ref{start_eval_line}-\ref{end_eval_line}). They then broadcast their Q-network parameters along with $\sigma^i_{k+1}$ (Line \ref{brodcast_line}). Receiving these from neighbours on the network, agents select which set of parameters to adopt by taking a softmax over their own and the received estimate values $\sigma^j_{k+1}$ (Lines \ref{receive_line}-\ref{adopt_line}). They repeat the process for $C_p$ rounds
. This allows decentralised agents to adopt policy parameters estimated to perform better than their own, accelerating learning as shown in Sec. \ref{theoretical_results}.

\section{Mean-field estimation and communication}\label{estimation}

We now give our algorithms for decentralised estimation of the empirical categorical mean-field distribution. We first describe the general version, assuming the more general setting where $\mathcal{G}^{obs}_t$ applies (see discussion in Sec. \ref{preliminaries_networks}). We subsequently detail how the algorithm can be made more efficient in environments where the more abstract visibility graph $\mathcal{G}^{vis}_t$ applies, as in our experimental settings. In both cases, the algorithm runs to generate the observation object when a step is taken in the main Alg. \ref{networked_algorithm_neural_networks}, i.e. to produce $o^i_t = (s^i_t, \tilde{\hat{\mu}}^i_t)$ for the steps $a^i_t \sim \pi^i_{k}(\cdot|o^i_t)$ in Lines \ref{step1}, \ref{step2} and \ref{step3}. {Note that if $\mathcal{G}^{obs}_t$/$\mathcal{G}^{vis}_t$ are \textit{fully} connected, all agents' estimated mean-field observations will be equivalent to the true categorical distribution.} Both versions of the algorithm are subject to implicit assumptions, which we discuss methods for addressing as future work in Appx. \ref{future_work}.

\subsection{Algorithm for the general setting}

\begin{algorithm}[!t]
\caption{Mean-field estimation 
in general settings}
\label{alg:mean_field_estimation_general}
\begin{algorithmic}[1]
\REQUIRE Time-dependent observation graph $\mathcal{G}^{obs}_t$, time-dependent communication graph $\mathcal{G}^{comm}_t$, states $\{s^i_t\}_{i=1}^{N}$, number of communication rounds $C_e$
\STATE $\forall i,s :$ Initialise count vector $\hat{\upsilon}^i_t[s]$ with $\emptyset$\label{initialise_empty_general_line}
\STATE $\forall i : $ $\hat{\upsilon}^i_t[s^j_t]$ $\leftarrow$ $\{ID^j\}_{j \in i \cup \{j'\in\mathcal{N} : (i, j') \in \mathcal{E}^{obs}_t\}}$\label{collect_observations_general_line}
\FOR{$c_e$ in $1,\dots,C_e$}
    \STATE $\forall i :$ Broadcast $\hat{\upsilon}^i_{t,c_e}$\label{broadcast_general_line}
    \STATE $\forall i : J^i_t \leftarrow \{j \in \mathcal{N} : (i,j) \in \mathcal{E}^{comm}_t\}$
    \STATE $\forall i,s$ : $\hat{\upsilon}^i_{t,(c_e+1)}[s] \leftarrow \hat{\upsilon}^i_{t,c_e}[s] \cup \{\hat{\upsilon}^j_{t,c_e}[s]\}_{j \in J^i_t}$ \label{merge_general_line}
\ENDFOR

\STATE $\forall i :$ $counted\_agents^i_t \leftarrow \sum_{s \in \mathcal{S} : \hat{\upsilon}^i_t[s] \neq \emptyset} |\hat{\upsilon}^i_t[s]|$
\STATE $\forall i :$ $uncounted\_agents^i_t \leftarrow N - counted\_agents^i_t$

\STATE $\forall i,s$ : $\tilde{\hat{\mu}}^i_t[s] \leftarrow \frac{uncounted\_agents^i_t}{N \times |\mathcal{S}|}$\label{uniformly_distribute_line}
\STATE $\forall i,s$ where $\hat{\upsilon}^i_t[s]$ is not $\emptyset$ : $\tilde{\hat{\mu}}^i_t[s] \leftarrow \tilde{\hat{\mu}}^i_t[s] + \frac{|\hat{\upsilon}^i_t[s]|}{N}$\label{not_empty_general_line}

\RETURN mean-field estimates  $\{\tilde{\hat{\mu}}^i_t\}_{i=1}^{N}$ 
\end{algorithmic}
\end{algorithm}


\noindent In this setting, our method (Alg. \ref{alg:mean_field_estimation_general}) assumes each agent is associated with a unique ID to avoid the same agents being counted multiple times. Each agent maintains a `count' vector ${\hat{\upsilon}}^i_t$ of length $|\mathcal{S}|$ i.e. of the same shape as the vector denoting the true empirical categorical distribution of agents. Each state position in the vector can hold a list of IDs. Before any actions are taken at each time step $t$, each agent's count vector ${\hat{\upsilon}}^i_t$ is initialised as full of $\emptyset$ (`no count') markers for each state (Line \ref{initialise_empty_general_line}). Then, for each agent $j$ with which agent $i$ is connected via the observation graph, $i$ places $j$'s unique ID in its count vector in the correct state position (Line \ref{collect_observations_general_line}). Next, for $C_e \geq 0$ communication rounds, agents exchange their local counts with neighbours on the communication network (Line \ref{broadcast_general_line}), and merge these counts with their own count vector, filtering out the unique IDs of those that have already been counted (Line \ref{merge_general_line}). If $C_e = 0$ then the local count will remain purely independent. By exchanging these partially filled vectors, agents are able to improve their local counts by adding the states of agents that they have not been able to observe directly themselves. 

After the $C_e$ communication rounds, each state position ${\hat{\upsilon}}^i_t[s]$ either still maintains the $\emptyset$ marker if no agents have been counted in this state, or contains $x_s > 0$ unique IDs. The local mean-field estimate $\tilde{\hat{\mu}}^i_t$ is then obtained from ${\hat{\upsilon}}^i_t$ as follows. All states that have a count $x_s$ have this count converted into the categorical probability $x_s / N$ (we assume that agents know the total number of agents in the finite population, even if they cannot observe them all at each $t$) (Line \ref{not_empty_general_line}). The total number of agents counted in ${\hat{\upsilon}}^i_t$ is given by $counted\_agents$ = $\sum_{s\in\mathcal{S}}x_s$, and the agents that have not been observed are $uncounted\_agents$ = $N$ - $counted\_agents$. In this general setting, the unobserved agents are assumed to be uniformly distributed across all the states, so $uncounted\_agents / (N \times |\mathcal{S}|)$ is added to all the values in $\tilde{\hat{\mu}}^i_t$, replacing the $\emptyset$ marker for states for which no agents have been observed (Line \ref{uniformly_distribute_line}).

\begin{algorithm}[!t]
\caption{Mean-field estimation 
for environments with $\mathcal{G}^{vis}_t$}
\label{alg:mean_field_estimation_specific}
\begin{algorithmic}[1]
\REQUIRE Time-dependent visibility graph $\mathcal{G}^{vis}_t$, time-dependent communication graph $\mathcal{G}^{comm}_t$, states $\{s^i_t\}_{i=1}^{N}$, number of communication rounds $C_e$
\STATE $\forall i,s :$ Initialise count vector $\hat{\upsilon}^i_t[s]$ with $\emptyset$
\STATE $\forall i$ and $\forall s' \in \mathcal{S'} : (s^i_t, s') \in \mathcal{E}^{vis}_t$ : $\hat{\upsilon}^i_t[s']$ $\leftarrow$ $\sum_{j \in \{1,\dots,N\} : s^j_t = s'} 1$
\FOR{$c_e$ in $1,\dots,C_e$}
    \STATE $\forall i :$ Broadcast $\hat{\upsilon}^i_{t,c_e}$
    \STATE $\forall i : J^i_t = i \cup \{j \in \mathcal{N} : (i,j) \in \mathcal{E}^{comm}_t\}$
    \STATE $\forall i,s :$ Initialise new count vector $\hat{\upsilon}^i_{t,(c_e+1)}[s]$ with $\emptyset$
    \STATE $\forall i,s$ and $\forall j \in J^i_t : \hat{\upsilon}^i_{t,(c_e+1)}[s] \leftarrow \hat{\upsilon}^j_{t,c_e}[s]$ if $\hat{\upsilon}^j_{t,c_e}[s] \neq \emptyset$
\ENDFOR
\STATE $\forall i :$ $counted\_agents^i_t \leftarrow \sum_{s \in \mathcal{S} : \hat{\upsilon}^i_t[s] \neq \emptyset} \hat{\upsilon}^i_t[s]$
\STATE $\forall i :$ $uncounted\_agents^i_t \leftarrow N - counted\_agents^i_t$
\STATE $\forall i :$ $unseen\_states^i_t \leftarrow \sum_{s \in \mathcal{S} : \hat{\upsilon}^i_t[s] = \emptyset} 1$
\STATE $\forall i,s$ where $\hat{\upsilon}^i_t[s]$ is not $\emptyset$ : $\tilde{\hat{\mu}}^i_t[s] \leftarrow \frac{\hat{\upsilon}^i_t[s]}{N}$
\STATE $\forall i,s$ where $\hat{\upsilon}^i_t[s]$ is $\emptyset$ : $\tilde{\hat{\mu}}^i_t[s] \leftarrow \frac{uncounted\_agents^i_t}{N \times unseen\_states^i_t}$\label{distribute_where_empty_specific_line}
\RETURN mean-field estimates  $\{\tilde{\hat{\mu}}^i_t\}_{i=1}^{N}$ 
\end{algorithmic}
\end{algorithm}

\subsection{Algorithm for visibility-based environments}

We explain now the differences in our estimation algorithm (Alg. \ref{alg:mean_field_estimation_specific}) for the subclass of environments where $\mathcal{G}^{vis}_t$ applies in place of $\mathcal{G}^{obs}_t$, i.e. the mutual observability of agents depends in turn on the  mutual visibility of states. The benefit of $\mathcal{G}^{vis}_t$ over $\mathcal{G}^{obs}_t$ is that the former allows an agent in state $s$ to obtain a correct, complete count $x_{s'} \geq 0$ of all the agents in state $s'$, for any state $s'$ that is visible to $s$ (note the count may be zero). Unique IDs are thus not required as there is no risk of counting the same agent twice when receiving communicated counts: either \textit{all} agents in $s'$ have been counted, or no count has yet been obtained for $s'$. This simplifies the algorithm and helps preserve agent anonymity and privacy.

Secondly, uncounted agents cannot be in states for which a count has already been obtained, since the count is complete and correct, even if the count is $x_{s'} = 0$. Therefore after the $C_e$ communication rounds, the $uncounted\_agents$ proportion needs to be uniformly distributed only across the positions in the vector that still have the $\emptyset$ marker (Line \ref{distribute_where_empty_specific_line}), and not across all states as in the general setting. This makes the estimation more accurate in this setting.\footnote{In our Algs. \ref{alg:mean_field_estimation_general} and \ref{alg:mean_field_estimation_specific}, agents share their local \textit{counts} with neighbours on the communication network $\mathcal{G}^{comm}_{t}$, and only after the $C_e$ communication rounds do they complete their estimated distribution by distributing the uncounted agents along their vectors. An alternative would be for each agent to immediately form a local \textit{estimate} from their local count obtained via $\mathcal{G}^{obs}_{t}$ or $\mathcal{G}^{vis}_{t}$, which is only then communicated and updated via the communication network. However, we take the former approach to avoid poor local estimations spreading through the network and leading to widespread inaccuracies. Information that is certain (the count) is spread as widely as possible, before being locally converted into an estimate of the total mean field. The same would be the case in our extension proposed in Appx. \ref{future_work} for averaging noisy counts, i.e. only the counts would be averaged, with the estimates completed by distributing the remaining agents after the $C_e$ rounds.}

\section{Theoretical results}\label{theoretical_results}

To demonstrate the benefits of the networked architecture by comparison, we also consider 
the results of baseline central-agent and independent architectures given by alternative versions of our algorithm. As in previous MFG works \citep{policy_mirror_independent, benjamin2024networked}, in the central-agent setting, the Q-network updates of arbitrary agent $i=1$ are automatically pushed to all other agents, and the true global mean-field distribution is always used in place of the local estimate i.e. $\tilde{\hat{\mu}}^i_t$ = ${\hat{\mu}}_t$. In the independent case, there are no links in $\mathcal{G}^{comm}_t$ or $\mathcal{G}^{obs}_t$/$\mathcal{G}^{vis}_t$, i.e. $\mathcal{E}^{comm}_t = \mathcal{E}^{obs}_t = \mathcal{E}^{vis}_t = \emptyset$.

Networked populations often learn faster than central-agent ones in our experiments. To indicate how this is possible while allowing simplicity of the theory, we give a proof for a special case with relatively strong assumptions that give conditions under which networked populations \textit{do} outperform central-agent ones. Nevertheless the intuition provided by Thm. \ref{faster_learning_theorem}'s proof suggests why networked agents can learn faster even without enforcing the assumptions, and we discuss loosening them subsequently.  

Recall that at each iteration $k$ of Alg. \ref{networked_algorithm_neural_networks}, after independently updating their policies in Line \ref{policy_update_line}, the population has the policies \{$\pi^i_{k+1}$\}$_{i=1}^{N}$. There is randomness in these independent policy updates, stemming from the random sampling of each agent's independently collected buffer. In Lines \ref{start_eval_line}-\ref{end_eval_line}, agents estimate the infinite discounted returns \{$V(\pi^i_{k+1},I(\pi^i_{k+1}))\}_{i=1}^{N}$ (Def. \ref{Mean_field_discounted_return}) of their updated policies by computing \{$\sigma^i_{k+1}$\}$_{i=1}^{N}$: the $E$-step discounted return with respect to the \textit{empirical} mean field generated when agents follow policies \{$\pi^i_{k+1}$\}$_{i=1}^{N}$ (i.e. they do not at this stage all follow a single identical policy). We can characterise the approximation as \{$\sigma^i_{k+1}$\}$_{i=1}^{N}$ = \{$\widehat{V}(\pi^i_{k+1},I(\pi^i_{k+1}))\}_{i=1}^{N}$. We now assume the following:

\begin{assumption}\label{single_policy_assumption}
After the $C_p$ rounds in Lines \ref{start_comm}-\ref{end_comm} in which agents exchange and adopt policies from neighbours, the population is left with a single policy such that $\forall i,j \in \{1,\dots,N\}$ $\pi^i_{k+1} = \pi^j_{k+1}$.\footnote{Most simply Assumption \ref{single_policy_assumption} holds if 1) $\tau^{comm}_{k}$ is a positive constant sufficiently close to zero that the softmax essentially becomes a max function, and 2) the communication network $\mathcal{G}^{comm}_t$ is static and connected during the $C_p$ communication rounds, where $C_p$ is larger than the network diameter $d_{\mathcal{G}^{comm}_t}$. Under these conditions, previous results on max-consensus algorithms show that all agents in the network will converge on the highest $\sigma^{max}_{k+1}$ value (and hence the unique associated $\pi^{max}_{k+1}$) within a number of rounds equal to the diameter $d_{\mathcal{G}^{comm}_t}$ \citep{5348437,benjamin2024networked}. However, policy consensus 
can be achieved even outside of these conditions, including if the network is dynamic and not connected at every step, given appropriate values for $C_p$ and $\tau^{comm}_{k+1} \in \mathbb{R}_{>0}$.}
\end{assumption}

\begin{assumption}\label{approximation_ordering_assumption}
Assume that \{$\sigma^i_{k+1}$\}$_{i=1}^{N}$ are sufficiently good approximations so as to respect the ordering of the true values\\ \{$V(\pi^i_{k+1},I(\pi^i_{k+1}))\}_{i=1}^{N}$, i.e. $\forall i,j \in \{1,\dots,N\}$:
\[\sigma^i_{k+1} > \sigma^j_{k+1} \iff V(\pi^i_{k+1},I(\pi^i_{k+1})) > V(\pi^j_{k+1},I(\pi^j_{k+1})).\]
\end{assumption}

Call the network consensus policy $\pi^{\mathrm{net}}_{k+1}$, and its associated finitely estimated return $\sigma^{\mathrm{net}}_{k+1}$. Recall that the central-agent case is where the Q-network update of arbitrary agent $i=1$ is automatically pushed to all the others instead of the policy exchange in Lines \ref{start_eval_line}-\ref{end_comm}; this is equivalent to a networked case where policy consensus is reached on a \textit{random} one of the policies \{$\pi^i_{k+1}$\}$_{i=1}^{N}$. Call this policy \textit{arbitrarily} given to the whole population $\pi^{\mathrm{cent}}_{k+1}$, and its associated finitely estimated return $\sigma^{\mathrm{cent}}_{k+1}$. Now we can say:

\begin{theorem}\label{faster_learning_theorem}
    Given Assumptions \ref{single_policy_assumption} and \ref{approximation_ordering_assumption}, \[\mathbb{E}[V(\pi^{\mathrm{net}}_{k+1},I(\pi^{\mathrm{net}}_{k+1}))] > \mathbb{E}[V(\pi^{\mathrm{cent}}_{k+1},I(\pi^{\mathrm{cent}}_{k+1}))].\] Thus in expectation networked populations will increase their returns faster than central-agent ones.
\end{theorem}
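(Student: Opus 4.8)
The plan is to reduce the claim to a statement about order statistics. Under Assumption \ref{single_policy_assumption} the network reaches consensus on a single policy, and by the reasoning in the accompanying footnote (max-consensus), this consensus policy $\pi^{\mathrm{net}}_{k+1}$ is precisely the one whose finite-step return $\sigma^{\mathrm{net}}_{k+1}$ is the maximum among $\{\sigma^i_{k+1}\}_{i=1}^N$. Assumption \ref{approximation_ordering_assumption} then states that the ordering induced by the approximations $\sigma^i_{k+1}$ matches the ordering of the true returns $V(\pi^i_{k+1},I(\pi^i_{k+1}))$; hence the policy selected by the network is exactly $\arg\max_i V(\pi^i_{k+1},I(\pi^i_{k+1}))$. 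So I would first establish the clean characterisation
\[
V(\pi^{\mathrm{net}}_{k+1},I(\pi^{\mathrm{net}}_{k+1})) = \max_{i\in\{1,\dots,N\}} V(\pi^i_{k+1},I(\pi^i_{k+1})).
\]

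Next I would treat the centralised case. By the definition given just before the theorem, $\pi^{\mathrm{cent}}_{k+1}$ is an arbitrary (i.e. uniformly random) choice among $\{\pi^i_{k+1}\}_{i=1}^N$, independent of the realised return values. Writing the collection of true returns as random variables $X_i := V(\pi^i_{k+1},I(\pi^i_{k+1}))$ (random because of the sampling noise in the independent buffer-based updates), the centralised selection gives
\[
\mathbb{E}[V(\pi^{\mathrm{cent}}_{k+1},I(\pi^{\mathrm{cent}}_{k+1}))] = \tfrac{1}{N}\sum_{i=1}^N \mathbb{E}[X_i],
\]
the plain average, while the networked selection gives $\mathbb{E}[\max_i X_i]$. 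The remaining step is to invoke the elementary fact that the maximum of a finite collection is at least its average, with strict inequality in expectation as soon as the maximiser is not almost surely unique — i.e. as long as the policy updates are genuinely random so that there is positive probability that not all $X_i$ coincide. I would state this non-degeneracy explicitly as the justification for the strict inequality $\mathbb{E}[\max_i X_i] > \tfrac1N\sum_i \mathbb{E}[X_i]$.

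The main obstacle, and the point requiring the most care, is the independence between the centralised agent's index choice and the return values. The inequality $\mathbb{E}[\max_i X_i] \geq \mathbb{E}[X_I]$ for a random index $I$ is trivial when $I$ is independent of the $X_i$ (since then $\mathbb{E}[X_I] = \tfrac1N\sum_i\mathbb{E}[X_i] \le \mathbb{E}[\max_i X_i]$), but one must argue that the centralised scheme really does pick its propagated agent $i=1$ without reference to performance — which is exactly the content of the phrase ``automatically pushed'' and ``arbitrarily given''. I would make this assumption explicit and note that it is what distinguishes the centralised scheme (which discards the performance information $\sigma$) from the networked one (which exploits it). A secondary subtlety is that the comparison is between the \emph{true} returns of the \emph{selected} policies, so I must be careful that Assumption \ref{approximation_ordering_assumption} is used only to guarantee that the network selects the true-optimal policy, and not smuggled in elsewhere; the final inequality is then a clean statement purely about the true returns $X_i$, decoupled from the approximation quality.
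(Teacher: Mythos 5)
Your argument has the same skeleton as the paper's proof: the networked population's adoption is biased by measured performance while the centralised push is performance-blind, and Ass.~\ref{approximation_ordering_assumption} converts a statement about the $\sigma$'s into one about true returns. The one genuine discrepancy is your opening step. You assert that Ass.~\ref{single_policy_assumption}, ``by the reasoning in the accompanying footnote,'' implies $\pi^{\mathrm{net}}_{k+1}$ is exactly the $\arg\max$ of $\{\sigma^i_{k+1}\}_{i=1}^N$. But the footnote offers max-consensus ($\tau^{comm}_k \rightarrow 0$, static connected network, $C_p \geq d_{\mathcal{G}^{comm}_t}$) only as the \emph{simplest sufficient} condition for Ass.~\ref{single_policy_assumption}, and explicitly notes that consensus can also arise outside those conditions; at positive temperature the consensus policy is random and need not be the maximiser. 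Accordingly, the paper's proof claims only $\mathbb{E}[\sigma^{\mathrm{net}}_{k+1}] > \sigma^{\mathrm{mean}}_{k+1}$ from the softmax adoption rule (Line~\ref{softmax_adoption_prob}), reserving $\mathbb{E}[\sigma^{\mathrm{net}}_{k+1}] = \sigma^{\mathrm{max}}_{k+1}$ for the $\tau^{comm} \rightarrow 0$ limit. Your identity $V(\pi^{\mathrm{net}}_{k+1},I(\pi^{\mathrm{net}}_{k+1})) = \max_i V(\pi^i_{k+1},I(\pi^i_{k+1}))$ therefore proves the theorem only in that limiting regime; to cover Ass.~\ref{single_policy_assumption} in full generality you would need the weaker statement that softmax adoption stochastically biases the consensus index toward higher $\sigma$ (hence, via Ass.~\ref{approximation_ordering_assumption}, toward higher $V$), which is exactly the route the paper takes.

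That said, where your argument applies it is tighter than the paper's in two respects worth keeping. First, by working pointwise with the true returns $X_i$ --- the network selects $\max_i X_i$ on every realisation, while any index $I$ chosen independently of the $X_i$ satisfies $X_I \leq \max_i X_i$ pointwise, so $\mathbb{E}[X_I] = \tfrac{1}{N}\sum_i \mathbb{E}[X_i] \leq \mathbb{E}[\max_i X_i]$ --- you sidestep the paper's final step, which passes from $\mathbb{E}[\sigma^{\mathrm{net}}_{k+1}] > \mathbb{E}[\sigma^{\mathrm{cent}}_{k+1}]$ to the corresponding inequality for $V$ ``by Ass.~\ref{approximation_ordering_assumption}''; per-realisation order preservation does not by itself transport an inequality between expectations, so your decoupling (use Ass.~\ref{approximation_ordering_assumption} once, inside each realisation, to identify the selected policy, then compare true returns directly) is the cleaner logic. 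Second, you make explicit the non-degeneracy condition --- positive probability that the $X_i$ do not all coincide --- needed for the strict inequality, which the paper leaves implicit, and you correctly flag that the entire comparison rests on the centralised index being chosen without reference to performance, matching the paper's ``chosen arbitrarily.'' Patch the first paragraph's overreach (or explicitly restrict the claim to the footnote's conditions) and the proof is sound.
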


\begin{proof}\label{proof_of_theorem} Recall that before the communication rounds in Line \ref{start_comm} (Alg. \ref{networked_algorithm_neural_networks}), the  randomly updated policies \{$\pi^i_{k+1}$\}$_{i=1}^{N}$ have  associated estimated returns \{$\sigma^i_{k+1}$\}$_{i=1}^{N}$. Call the mean and maximum of this set $\sigma^{\mathrm{mean}}_{k+1}$ and $\sigma^{\mathrm{max}}_{k+1}$ respectively. Since $\pi^{\mathrm{cent}}_{k+1}$ is chosen arbitrarily from \{$\pi^i_{k+1}$\}$_{i=1}^{N}$, it will obey $\mathbb{E}[\sigma^{\mathrm{cent}}_{k+1}] = \sigma^{\mathrm{mean}}_{k+1}$ $\forall k$, though there will be high variance. Conversely, the softmax adoption probability (Line \ref{softmax_adoption_prob}, Alg. \ref{networked_algorithm_neural_networks}) for the networked case means by definition that policies with higher $\sigma^i_{k+1}$ are more likely to be adopted at each communication round. Thus the $\pi^{\mathrm{net}}_{k+1}$ that is adopted by the whole networked population will obey $\mathbb{E}[\sigma^{\mathrm{net}}_{k+1}] > \sigma^{\mathrm{mean}}_{k+1}$ (if $\tau^{comm}_{k+1}$ is a positive value near zero, it will obey $\mathbb{E}[\sigma^{\mathrm{net}}_{k+1}] = \sigma^{\mathrm{max}}_{k+1}$ $\forall k$). So $\mathbb{E}[\sigma^{\mathrm{net}}_{k+1}] > \mathbb{E}[\sigma^{\mathrm{cent}}_{k+1}]$, which by Assumption \ref{approximation_ordering_assumption} implies the result.
\end{proof}

The adoption scheme in Line \ref{softmax_adoption_prob} biases the spread of policies towards those estimated to be better, which, given sufficiently good approximations (Assumption \ref{approximation_ordering_assumption}), results in higher discounted returns in practice. By choosing updates in a more principled way, networked agents learn faster than the central-agent case that pushes updates regardless of quality.\footnote{Assumption \ref{single_policy_assumption} may not hold if $C_p$ is not large enough or if parts of the population remain isolated. Thus in our experiments, where we use $C_p = 1$ to show the benefit of even one communication round, networked populations with smaller broadcast radii outperform central-agent populations by a smaller margin. Nevertheless the intuition provided by  Thm. \ref{faster_learning_theorem}’s proof indicates how the former are still able to perform better even \textbf{if Assumption \ref{single_policy_assumption} is loosened}.}\footnote{Even \textbf{if Assumption \ref{approximation_ordering_assumption} does not strictly hold}, the softmax parameter $\tau^{comm}_{k}$ allows a smooth degradation as the ordering of the approximations worsens with respect to the ordering of the true values. I.e. if instead of the exact correct policy ordering we have that better policies are simply \textit{more likely} to be given higher estimated evaluations, then the softmax means that these policies remain \textit{more likely} to spread, and the best policy may still be adopted even if it is not evaluated as the best.} 
Similar logic can be applied to understand why networked agents outperform independent ones, coupled with the fact that greater policy diversity in the independent case worsens sample complexity over the networked and central-agent cases by biasing approximations of the Q-function \citep{policy_mirror_independent,benjamin2024networked}.

Significantly, our communication scheme not only avoids the undesirable assumption of a central node, but even outperforms it. Moreover, the benefit of the scheme over central-agent learning is greater with our function approximation than in the tabular case (cf. \citep{benjamin2024networked}), perhaps due to greater variance in the quality of Q-function estimates in our case. This shows that networked communication facilitates greater scalability than the central-agent paradigm.

\section{Experiments}\label{experiments_section}

We provide two sets of experiments. The first shows that our function-approximation algorithm (Alg. \ref{networked_algorithm_neural_networks}) can scale to large state spaces for population-independent policies, and that in such settings networked, communicating populations can outperform purely-independent and even central-agent populations
. The second set demonstrates that Alg. \ref{networked_algorithm_neural_networks} can handle population-dependent policies, as well as the ability of Alg.  \ref{alg:mean_field_estimation_specific} to 
estimate the mean field locally. 

\subsection{Experimental set-up}

For the types of game in our experiments we follow the gold standard in prior MFG works, i.e. grid-world environments where agents can move in the four cardinal directions or remain in place \citep{lauriere2021numerical,scalable_deep,zaman2023oraclefree,algumaei2023regularization,cui2023multiagent,benjamin2024networked,wu2024populationaware}. We present results from four tasks defined by the agents' reward/transition functions, all of which are \textit{coordination} tasks - see Appx. \ref{objectives_appendix} for full technical descriptions, and also a fifth, non-coordination task. In all cases, rewards are normalised in [0,1] after they are computed. The first two tasks are those used with population-independent policies in \citet{benjamin2024networked}, but while they show results for an 8x8 and a `larger' 16x16 grid, our results are for $100\times 100$ and $50\times 50$ grids:

\begin{itemize}
    \item \textbf{Cluster.} Agents are rewarded for gathering but given no indication where to do so, agreeing it over time.

\item \textbf{Target agreement.} Agents are rewarded for visiting any of a given number of targets, but the reward is proportional to the number of other agents co-located at the target. Agents must coordinate on which single target they will all meet at to maximise their individual rewards.
\end{itemize}

We also show our algorithms handling more complex tasks
:

\begin{itemize}
    \item \textbf{Evade shark in shoal.} At each $t$, a ‘shark’ takes a step towards the grid point containing the most agents according to the empirical mean field. The shark's position is part of agents' local states in addition to their own position. Agents are rewarded more for being further from the shark and for clustering with other agents. As well as having a non-stationary distribution, we add ‘common noise’, with the shark taking a random step with probability 0.01. Such noise that affects the local states of all agents in the same way, making the evolution of the distribution stochastic, makes population-independent policies sub-optimal \citep{survey_learningMFGs}.

\item \textbf{Push object to edge.} Agents are rewarded for 
how close an ‘object’ is to the grid's edge. The object's position forms part of agents' local states in addition to their own position. The object moves in a direction with a probability proportional to the number of agents on its opposite side, i.e. agents must coordinate on which side of the object from which to ‘push’ it, to ensure it moves toward the edge of the grid.
\end{itemize}

We evaluate our experiments via two metrics. 
\textit{Exploitability} is the most common metric in MFG works, and is a measure of proximity to the MFNE. It quantifies how much an agent can benefit by deviating from the set of policies that generate the current mean field, with a lower exploitability meaning the population is closer to the MFNE. However, there are several issues with this metric in our setting, particularly for our coordination tasks where competitive agents benefit from aligning behaviours, such that it may give limited or noisy information (discussed further in Appx. \ref{exploitability}). We thus also give a second metric, as in \citet{benjamin2024networked}: the population’s \textit{average discounted return}. {This allows us to compare how quickly agents are learning to increase their returns, even when exploitability gives us limited ability to distinguish between the desirability of the MFNEs to which populations converge.}

\subsection{Results and discussion}\label{results}

In our spatial environments, the physical distance from $i$ determines the communication graph $\mathcal{G}^{comm}_t$ and the visibility graph $\mathcal{G}^{vis}_t$. Our plots show various radii, given as fractions of the maximum possible distance (the grid's diagonal length). We set $C_p=C_e=1$ to show the benefit to learning speed brought by even a \textit{single} communication round. Note that the networked population with the largest radius is always fully connected, and therefore these agents are always able to accurately estimate ${\hat{\mu}}_t$ even for $C_e=0$. That is, their observations are equivalent to those that the central-agent population would receive, albeit that policies are updated and spread differently. Other hyperparameters are detailed in Appx. \ref{hyperparameters_section}.

\begin{figure}[t]
\centering
\includegraphics[width=0.98\columnwidth]{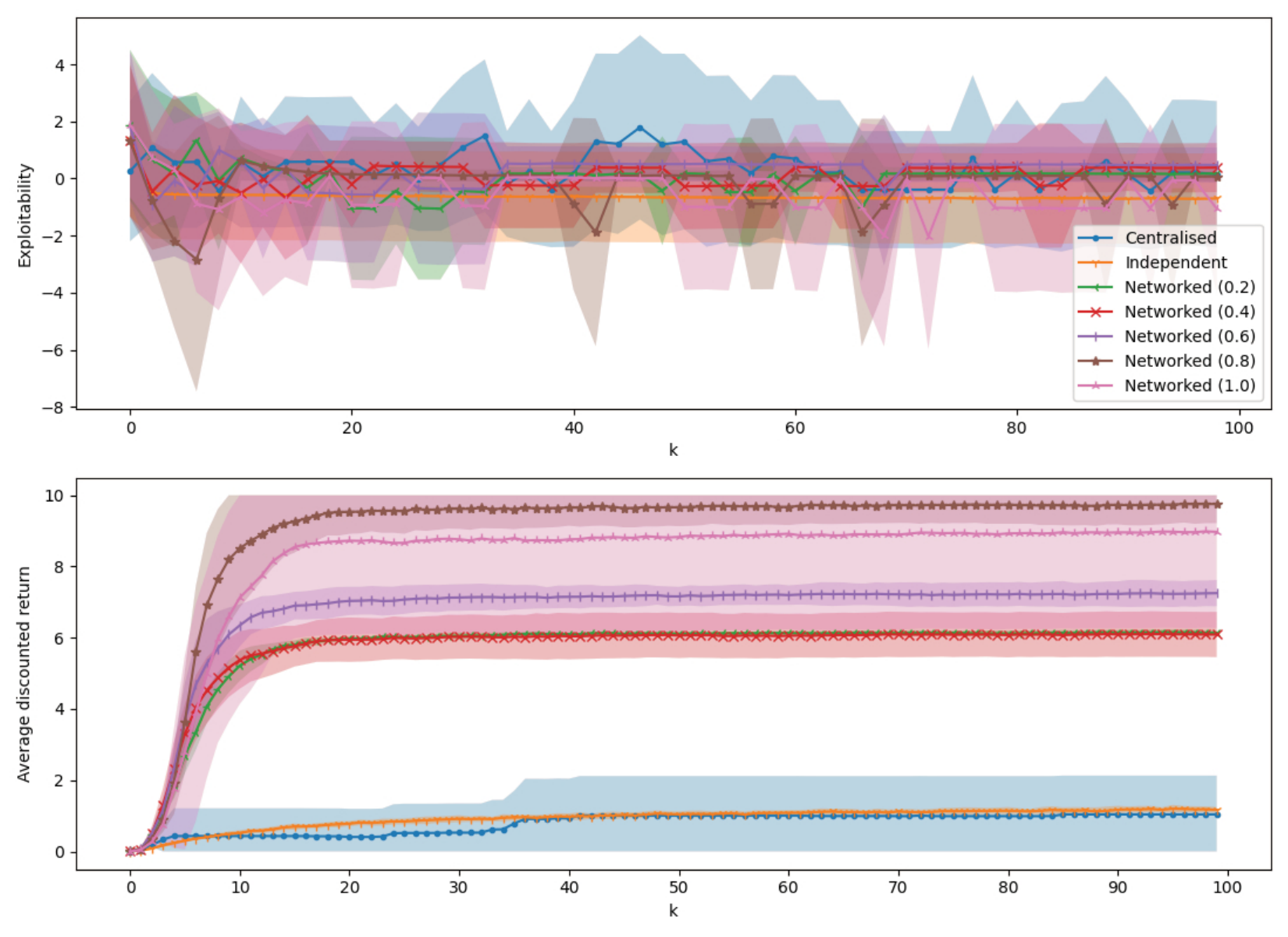}
\caption{`Target agreement', pop.-independent, $100\times 100$ grid. Reproduced larger in Fig. \ref{agree100bigger}. The networked populations of all broadcast radii significantly outperform the central-agent and independent populations in terms of average return, where the latter two cases hardly appear to learn at all.}
\label{agree100}
\Description{`Target agreement', population-independent, $100\times 100$ grid. The networked populations of all broadcast radii significantly outperform the central-agent and independent populations in terms of average return, where the latter two cases hardly appear to learn at all.}
\end{figure}

\begin{figure}[t]
\centering
\includegraphics[width=0.98\columnwidth]{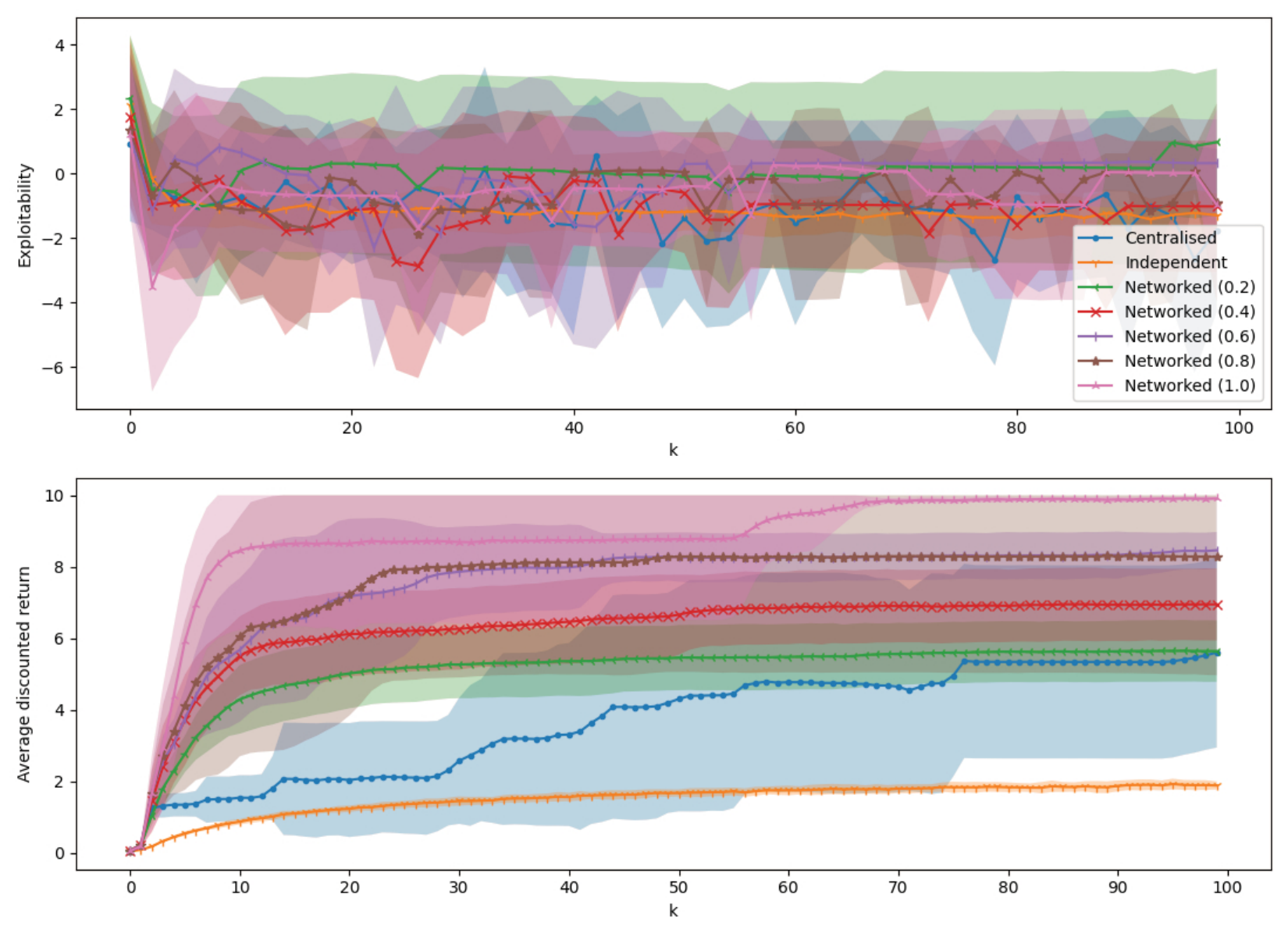}
\caption{`Cluster', pop.-independent, $100\times 100$ grid. Larger version in Fig. \ref{cluster100bigger}. Networked populations of all broadcast radii outperform the central-agent and independent populations wrt. average return; independent agents hardly learn at all.}
\label{cluster100}\Description{`Cluster', population-independent, $100\times 100$ grid. The networked populations of all broadcast radii outperform the central-agent and independent populations in terms of average return; independent agents hardly appear to learn at all.}
\end{figure}

\subsubsection{Population-independent policies in large state-spaces}

Figs. \ref{agree100} and \ref{cluster100} (for $100\times 100$ and grids; reproduced larger in Figs. \ref{agree100bigger} and \ref{cluster100bigger}, Appx. \ref{experiments_appendix}), and Figs. \ref{agree50} and \ref{cluster50} (for $50\times 50$ grids; Appx. \ref{experiments_appendix})  
illustrate that introducing function approximation to algorithms in this setting allows them to converge within a practical number of iterations ($k\ll100$), even for large state spaces. By contrast, the tabular algorithms in \citet{benjamin2024networked} appear only just to converge by $k=200$ for the same tasks for the {larger} of their two grids, which is only $16\times16$. 

In Figs. \ref{agree100}, \ref{cluster100} and \ref{agree50}, no populations appear to have significantly different exploitability to each other, while in Fig. \ref{cluster50} the central-agent population may have lower exploitability, but not significantly so. As discussed in Appx. \ref{exploitability}, the exploitability metric is noisy and provides limited information in these coordination games. Nevertheless we can see that in all four plots the independent agents hardly improve their returns at all, while the central-agent populations hardly improve their returns in the `target agreement' games in Figs. \ref{agree100} and \ref{agree50}. There is therefore little a deviating agent can do to increase its return in these coordination games, meaning exploitability appears low, despite these being undesirable equilibria. 

Meanwhile, the networked agents do improve their returns and thus significantly outperform the stagnant independent agents in Figs. \ref{agree100}, \ref{cluster100}, \ref{agree50} and \ref{cluster50} and the stagnant central-agent populations in Figs. \ref{agree100} and \ref{agree50}. This indicates that our communication scheme helps agents to reach ‘preferable’ equilibria, even when exploitability is similar (for reasons related to how networked populations can increase their returns faster as per Sec. \ref{theoretical_results}). While central-agent populations do appear to increase their returns in the `cluster' task in Figs. \ref{cluster100} and \ref{cluster50}, they do so more slowly and reaching a lower final value than all networked agents in the $100\times100$ grid case, and than all networked agents apart from those with the smallest broadcast radii in the $50\times50$ grid case. Indeed in the $100\times100$ grids in Figs. \ref{agree100} and \ref{cluster100}, the central-agent populations appear to perform less well than they do in the $50\times50$ grids in Figs. \ref{agree50} and \ref{cluster50}, whereas the networked populations do not suffer a performance decrease, indicating that our networked communication scheme scales better and is robust to larger state spaces than the central-agent paradigm. Similarly, in the `target agreement' and `cluster' tasks in the tabular setting in \citet{benjamin2024networked}, the central-agent populations generally perform similarly to the networked populations, indicating that our networked architecture is more robust than the central-agent alternative when moving to non-tabular settings. 


\subsubsection{Population-dependent policies in complex environments}

We also show the ability of our algorithms to handle more complex tasks, using population-dependent policies and estimated mean-field observations. Figs. \ref{push,estimate} and \ref{evade,estimate} (Appx. \ref{experiments_appendix}), where agents estimate the mean field via Alg. \ref{alg:mean_field_estimation_specific}, differ minimally from Figs. \ref{push,global} and \ref{evade,global}, where agents directly receive the global mean field. This indicates that our estimation algorithm allows agents to appropriately estimate the distribution, even with only one round of communication for agents to help each other improve their local counts. Only in the `push object' task in Fig. \ref{push,estimate}, and there only with the smaller broadcast radii, do agents slightly underperform the returns of agents in the global observability case in Fig. \ref{push,global}, as might be expected.

For the reasons given in Appx. \ref{exploitability} regarding coordination games, the exploitability metric gives limited information in the `push object' and `evade' tasks in Figs. \ref{push_task} and  \ref{evade_task}: for example, the return of a best-responding agent in the `push object' task still depends on the extent to which other agents coordinate on which direction in which to push the box, meaning it cannot significantly increase its return by deviating. However, all of the networked cases significantly outperform the independent learners in terms of the average return to which they converge in both tasks. In the `push object' task networked learners also appear to outperform central-agent populations, while in the `evade' task all networked cases perform similarly to the central-agent case. Recall though that in the real world a central-agent architecture is a strong assumption, a computational bottleneck and single point of failure.

\section{Conclusion}\label{conclusion}

We introduced function approximation to the online-learning setting for empirical MFGs, and also contributed two novel algorithms for locally estimating the empirical mean field for populat\-ion-dependent policies. We proved theoretically that our networked communication algorithms can learn faster than even central-agent 
architectures in this 
setting, and showed empirically that they can handle large state spaces and estimate the mean field. For future work, see Appx. \ref{future_work}.

\bibliographystyle{ACM-Reference-Format} 
\bibliography{main}


\newpage\;\newpage
\appendix
\section*{\huge Appendices}

\section{Experimental set-up}\label{experiments_appendix}

Experiments were conducted on a Linux-based machine with 2 x Intel Xeon Gold 6248 CPUs (40 physical cores, 80 threads total, 55 MiB L3 cache). We use the JAX framework to accelerate and vectorise our code.  For reproducibility, our code is included in our Supplementary Material. Random seeds are set in our code in a fixed way dependent on the trial number to allow easy replication of experiments. 

\begin{figure*}[t]
\centering
\includegraphics[width=\textwidth
]{agree100_cmyk.pdf}
\caption{Larger version of Fig. \ref{agree100}. `Target agreement', population-independent, $100\times 100$ grid. The networked populations of all broadcast radii significantly outperform the central-agent and independent populations in terms of average return, where the latter two cases hardly appear to learn at all.}
\label{agree100bigger}
\Description{`Target agreement', population-independent, $100\times 100$ grid. The networked populations of all broadcast radii significantly outperform the central-agent and independent populations in terms of average return, where the latter two cases hardly appear to learn at all.}
\end{figure*}

\begin{figure*}[t]
\centering
\includegraphics[width=\textwidth
]{cluster100_cmyk.pdf}
\caption{Larger version of Fig. \ref{cluster100}. `Cluster', population-independent, $100\times 100$ grid. Networked populations of all broadcast radii outperform the central-agent and independent populations wrt. average return; independent agents hardly appear to learn at all.}
\label{cluster100bigger}\Description{`Cluster', population-independent, $100\times 100$ grid. The networked populations of all broadcast radii outperform the central-agent and independent populations in terms of average return; independent agents hardly appear to learn at all.}
\end{figure*}

\begin{figure*}[t]
\centering
\includegraphics[width=\textwidth
]{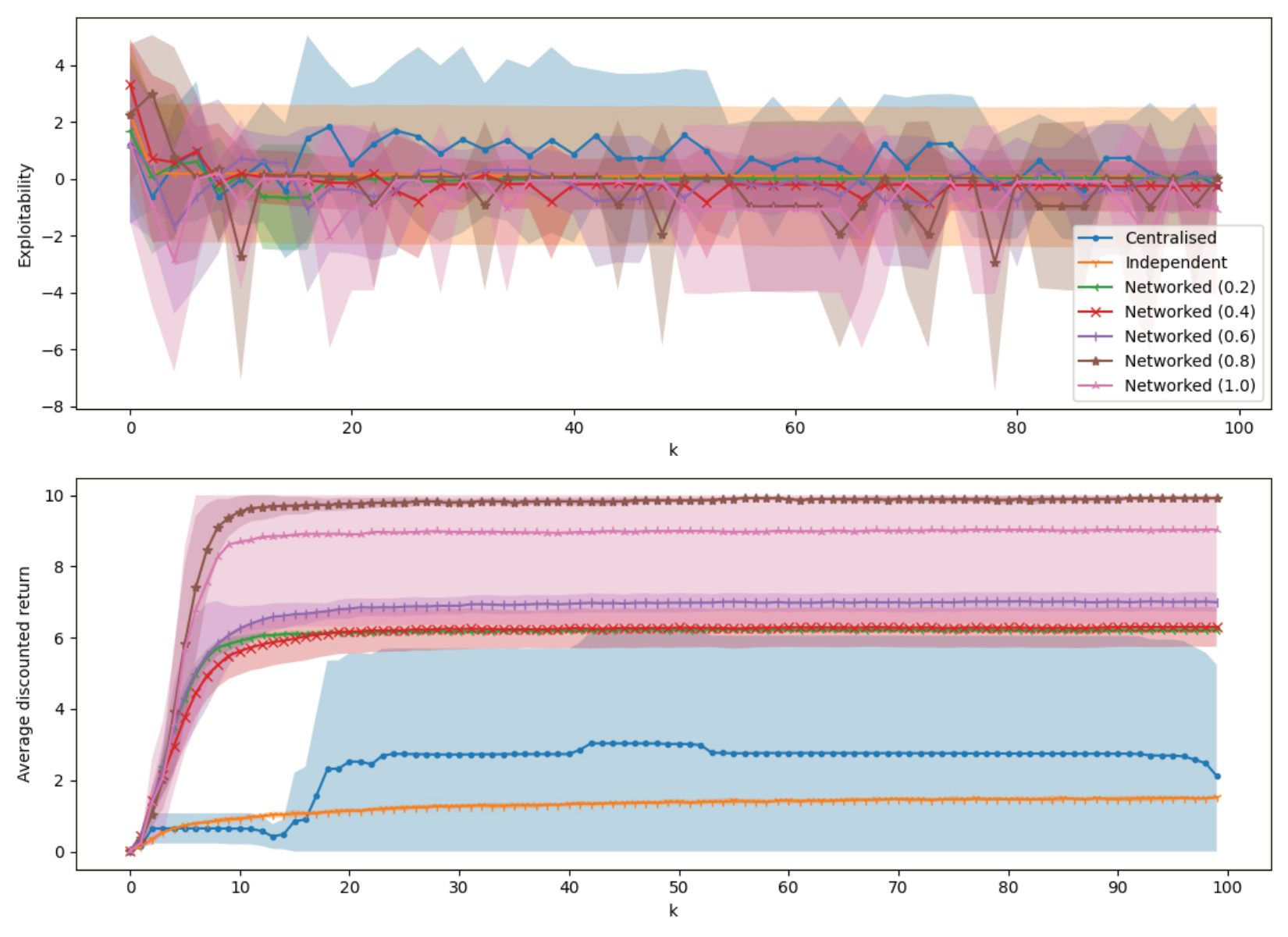}
\caption{`Target agreement' task, population-independent policies, $50\times 50$ grid. The networked populations of all broadcast radii significantly outperform the central-agent and independent populations in terms of average return, where the latter two cases hardly appear to learn at all.}
\label{agree50}\Description{`Target agreement' task, population-independent policies, $50\times 50$ grid. The networked populations of all broadcast radii significantly outperform the central-agent and independent populations in terms of average return, where the latter two cases hardly appear to learn at all.}
\end{figure*}

\begin{figure*}[t]
\centering
\includegraphics[width=\textwidth
]{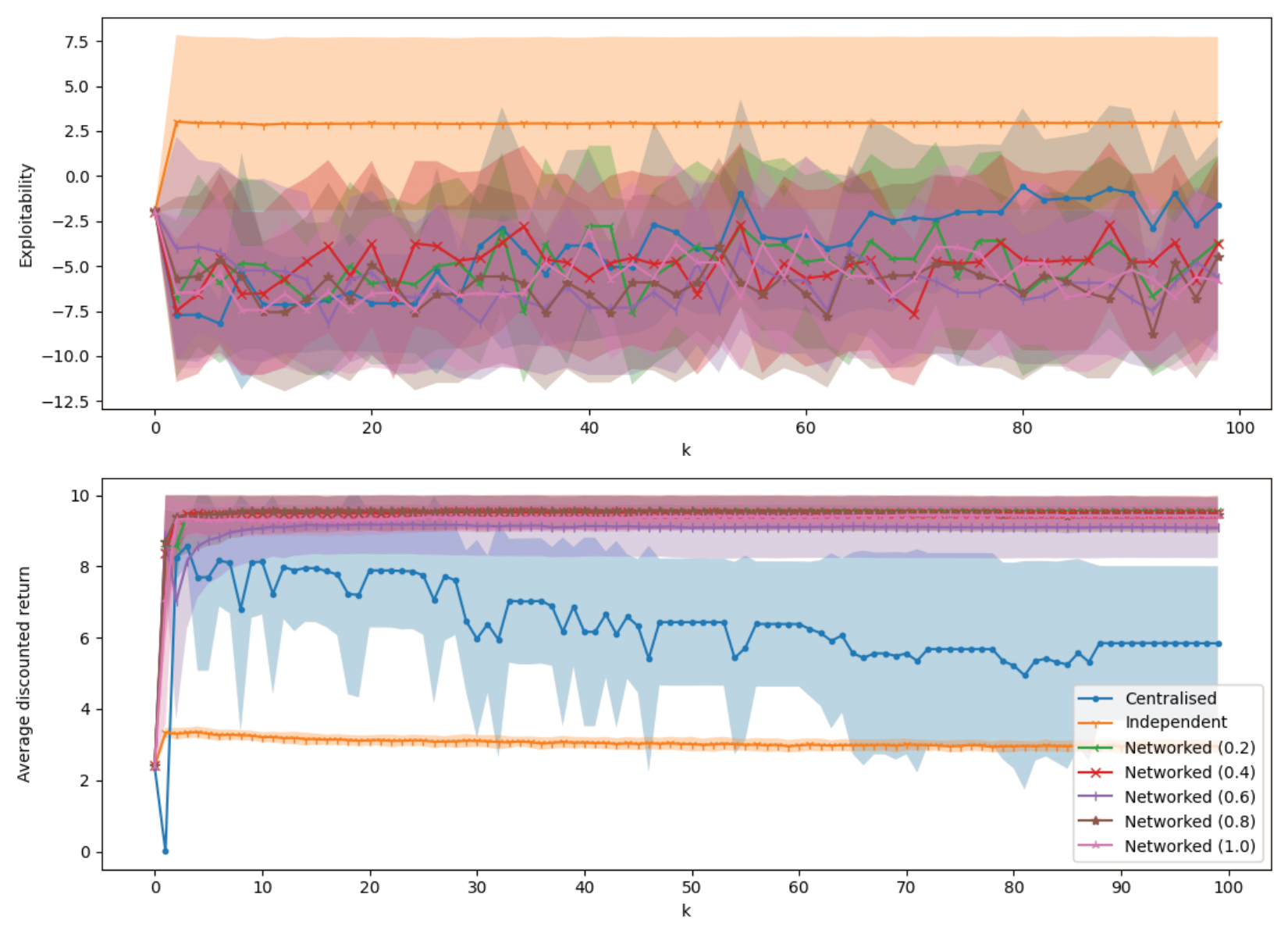}
\caption{`Disperse' task, population-independent policies, $100\times 100$ grid. The networked populations of all broadcast radii significantly outperform the central-agent and independent populations in terms of average return (and exploitability in the case of independent agents), with independent agents not learning at all.}
\label{diffuse100}\Description{`Disperse' task, population-independent policies, $100\times 100$ grid. The networked populations of all broadcast radii significantly outperform the central-agent and independent populations in terms of average return (and exploitability in the case of independent agents), with independent agents not learning at all.}
\end{figure*}

\begin{figure*}[t]
\centering
\includegraphics[width=\textwidth
]{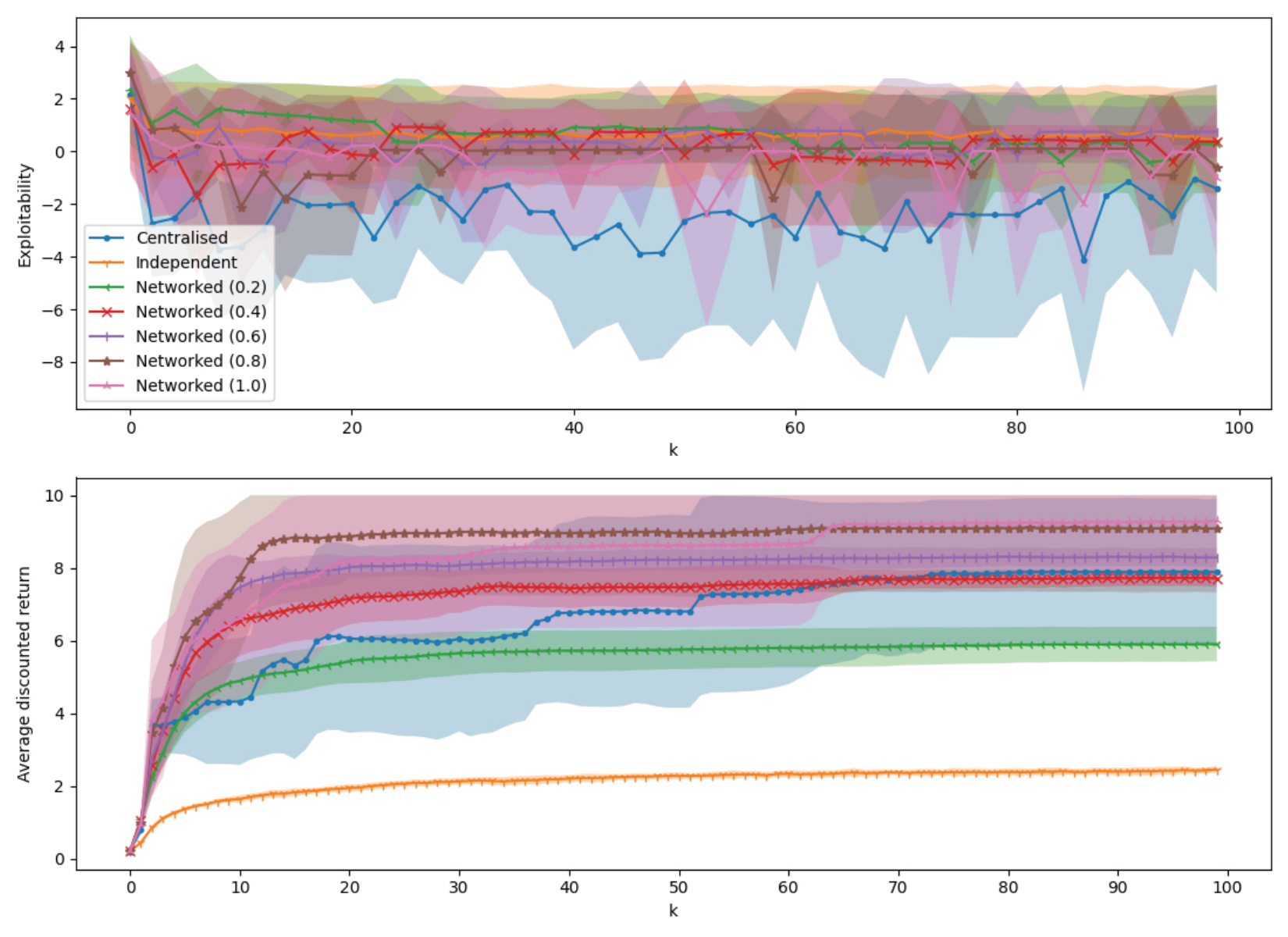}
\caption{`Cluster' task, population-independent policies, $50\times 50$ grid. The networked agents of all broadcast radii significantly outperform the independent agents in terms of average return, where the latter case hardly appears to learn at all. The higher broadcast radii also appear to outperform the central-agent case in terms of average return, with the latter outperforming all others in terms of exploitability.}
\label{cluster50}\Description{`Cluster' task, population-independent policies, $50\times 50$ grid. The networked agents of all broadcast radii significantly outperform the independent agents in terms of average return, where the latter case hardly appears to learn at all. The higher broadcast radii also appear to outperform the central-agent case in terms of average return, with the latter outperforming all others in terms of exploitability.}
\end{figure*}

\begin{figure*}[t]
    \centering
    \begin{subfigure}[b]{0.49\textwidth
    }
        \centering
        \includegraphics[width=\textwidth]{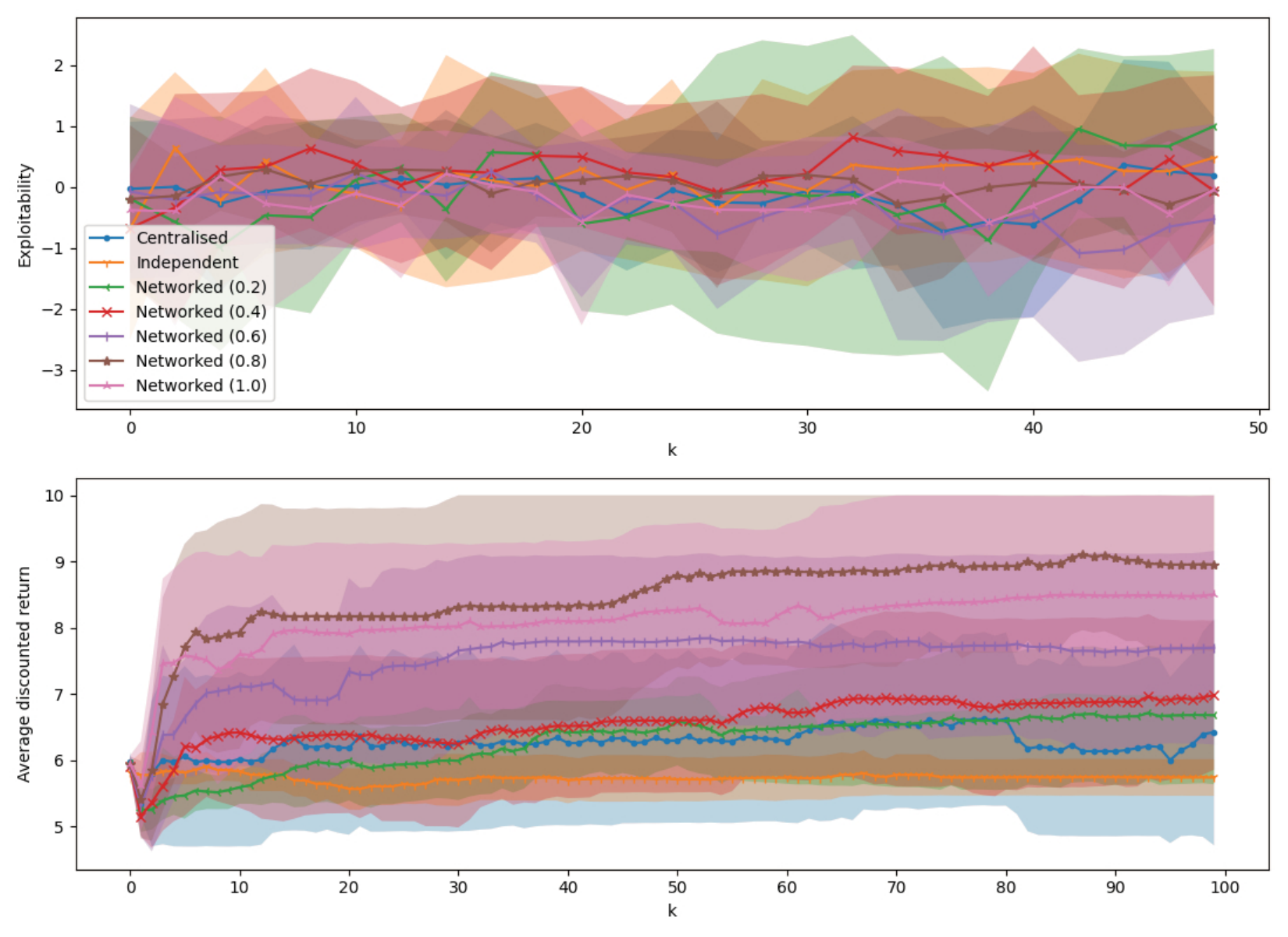}
        \caption{Estimated mean-field distribution.}
        \label{push,estimate}
    \end{subfigure}
    \hfill
    \begin{subfigure}[b]{0.49\textwidth
    }
        \centering
        \includegraphics[width=\textwidth]{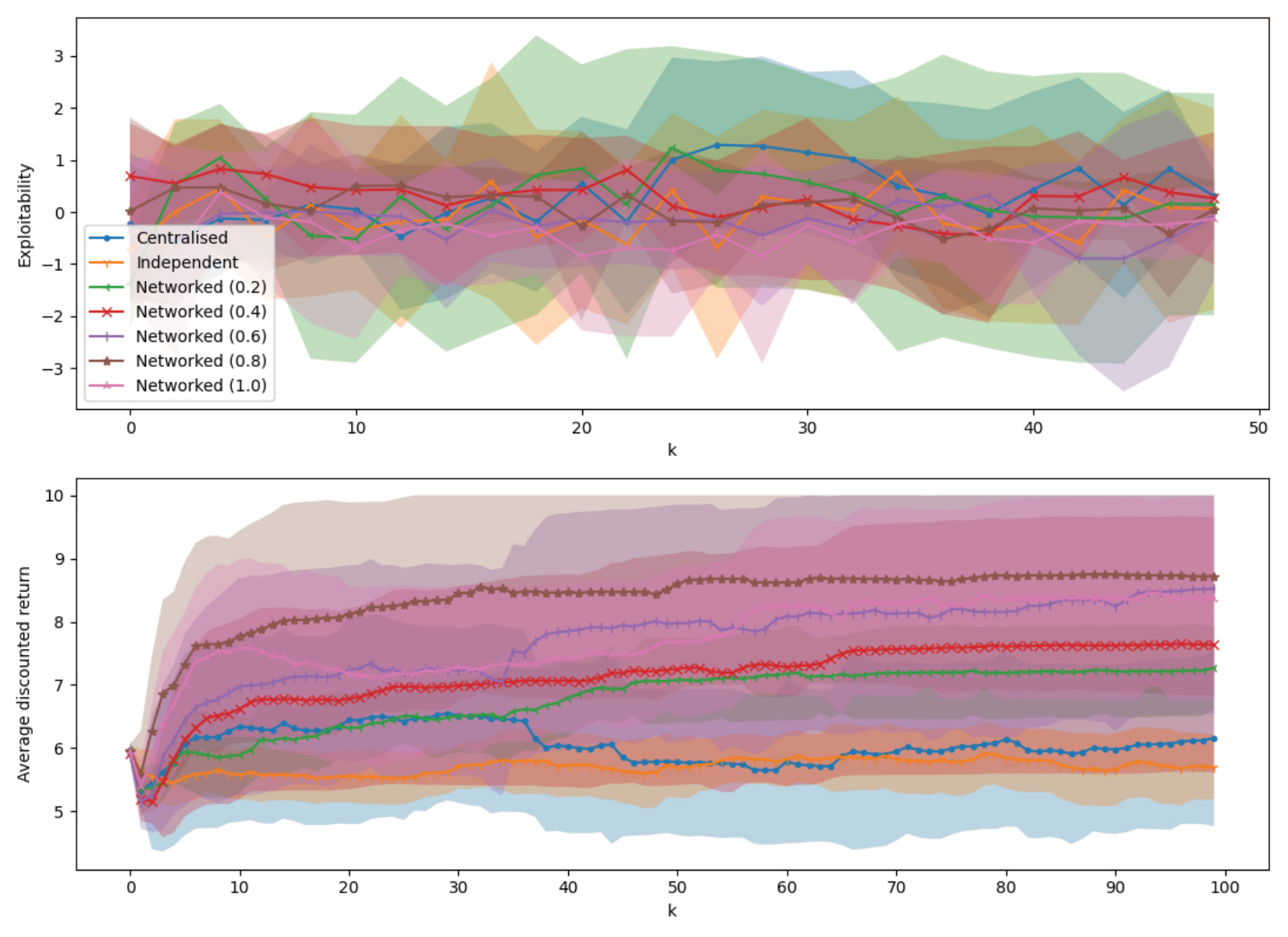}
        \caption{Global observability of mean field.}
        \label{push,global}
    \end{subfigure}
    \caption{`Push object' task, population-dependent policies on a $10\times 10$ grid. The networked populations of all broadcast radii appear to outperform the central-agent and independent populations in terms of average return; the latter two cases hardly appear to learn at all.}
    \label{push_task}\Description{`Push object' task, population-dependent policies on a $10\times 10$ grid. The networked populations of all broadcast radii appear to outperform the central-agent and independent populations in terms of average return; the latter two cases hardly appear to learn at all.}
\end{figure*}

\begin{figure*}[t]
    \centering
    \begin{subfigure}[b]{0.49\textwidth
    }
        \centering
        \includegraphics[width=\textwidth]{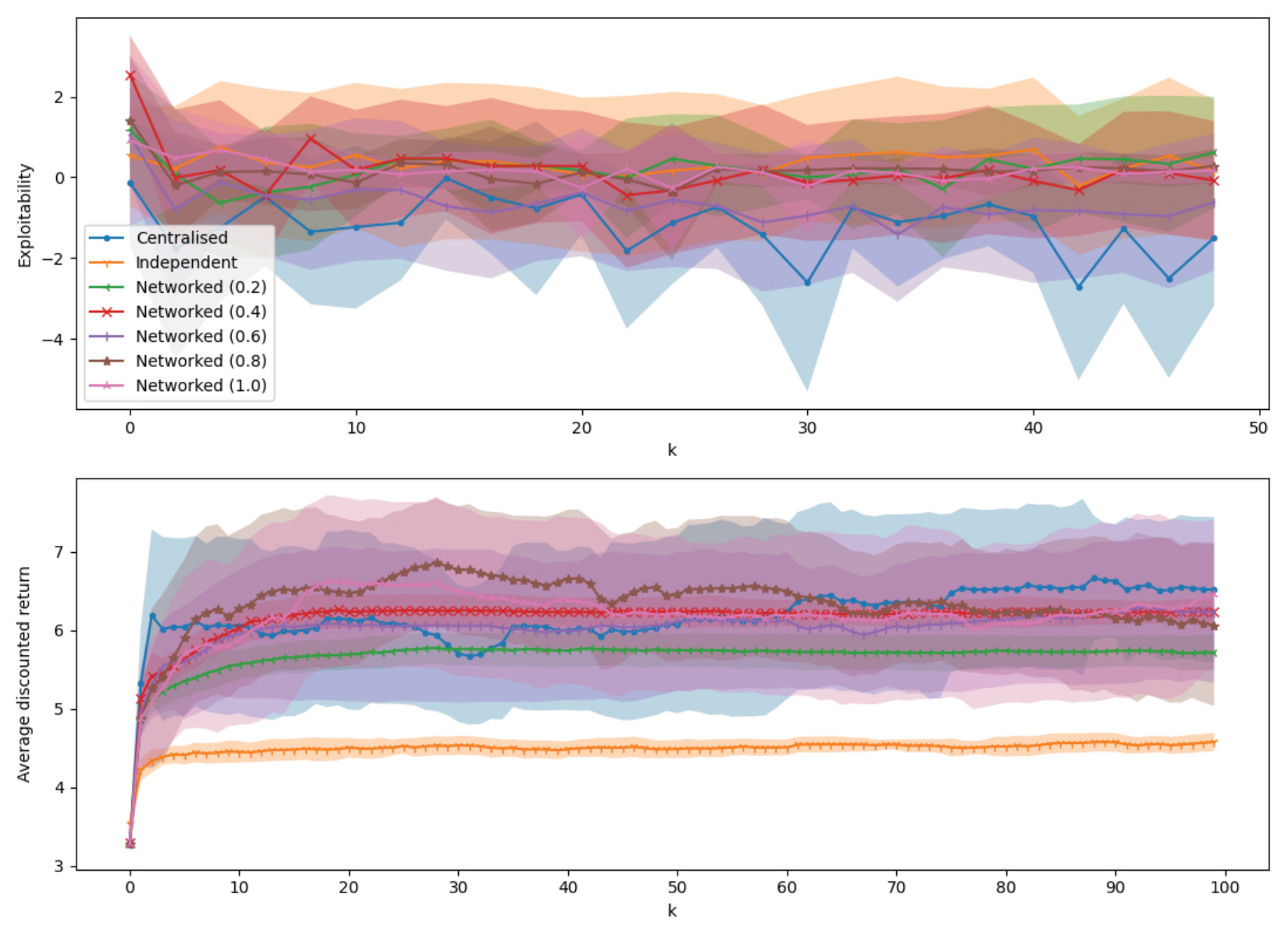}
        \caption{Estimated mean-field distribution.}
        \label{evade,estimate}
    \end{subfigure}
    \hfill
    \begin{subfigure}[b]{0.49\textwidth
    }
        \centering
        \includegraphics[width=\textwidth]{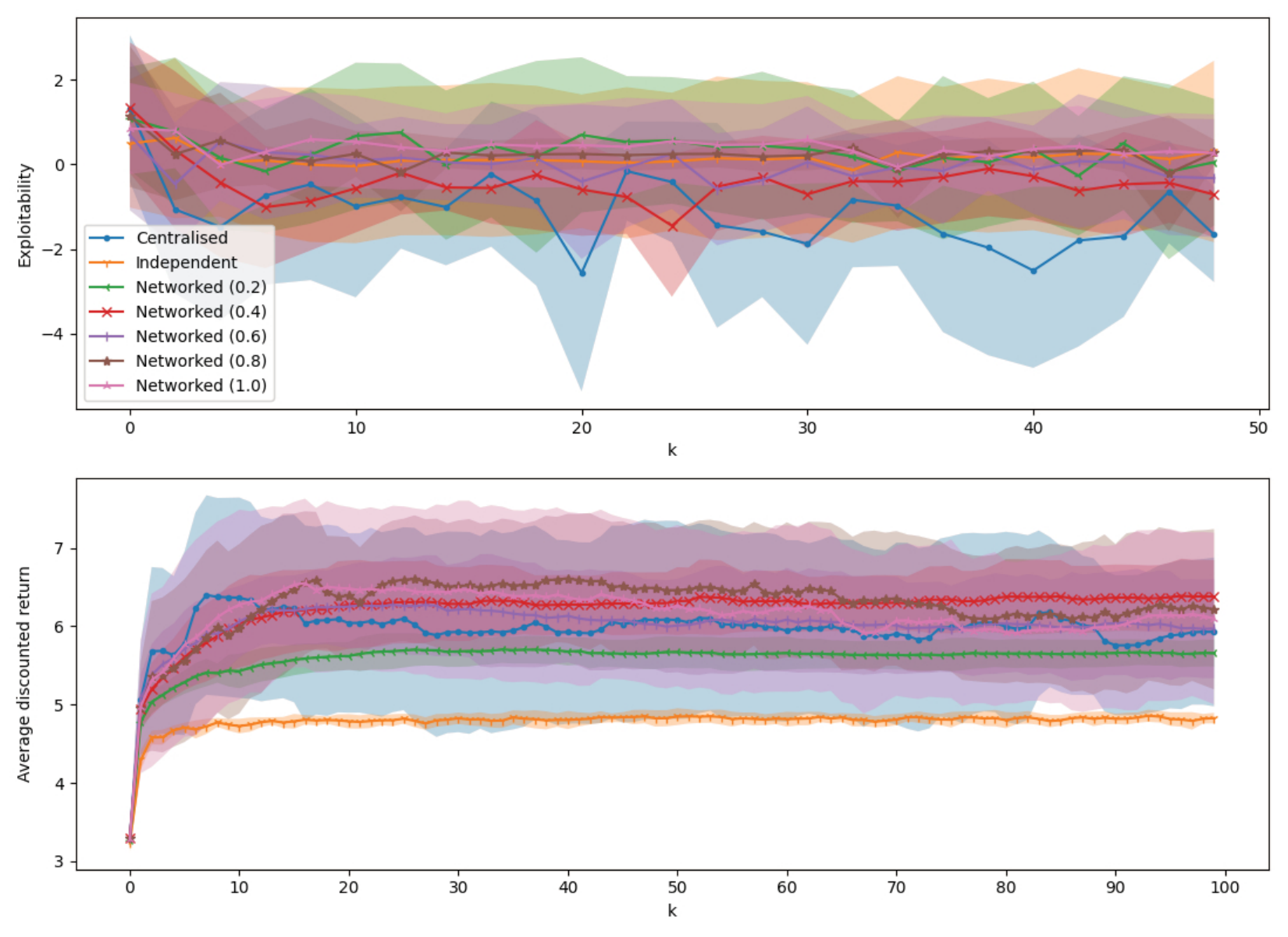}
        \caption{Global observability of mean field.}
        \label{evade,global}
    \end{subfigure}
    \caption{`Evade' task, population-dependent policies on a $10\times 10$ grid. The networked agents of all broadcast radii significantly outperform the independent agents in terms of average return, and perform similarly to the central-agent populations.}
    \label{evade_task}\Description{`Evade' task, population-dependent policies on a $10\times 10$ grid. The networked agents of all broadcast radii significantly outperform the independent agents in terms of average return, and perform similarly to the central-agent populations.}
\end{figure*}

\subsection{Technical description of tasks}\label{objectives_appendix}

\paragraph{Cluster.} This is the inverse of the `exploration' game in \citet{scalable_deep}, where in our case agents are encouraged to gather together by the reward function $R(s^i_{t},a^i_{t},\hat{\mu}_t) =$ log$(\hat{\mu}_t(s^i_{t}))$. That is, agent $i$ receives a reward that is logarithmically proportional to the fraction of the population that is co-located with it at time $t$. We give the population no indication where they should cluster, agreeing this themselves over time.

\paragraph{Target agreement.} Unlike in the above `cluster' game, the agents are given options of locations at which to gather, and they must reach consensus among themselves. If the agents are co-located with one of a number of specified targets $\phi \in \Phi$ (in our experiments we place one target in each of the four corners of the grid), and other agents are also at that target, they get a reward proportional to the fraction of the population found there; otherwise they receive a penalty of -1. In other words, the agents must coordinate on which of a number of mutually beneficial points will be their single gathering place to maximise their individual rewards. Define the magnitude of the distances between $x,y$ at $t$ as $dist_t(x,y)$. The reward function is given by $R(s^i_{t},a^i_{t},\hat{\mu}_t) = r_{targ}(r_{collab}(\hat{\mu}_t(s^i_{t})))$, where \[r_{targ}(x)=\begin{cases}
          x \quad &\text{if} \, \exists \phi \in \Phi \text{ s.t. dist}_t(s^i_{t},\phi) = 0 \\
          -1 \quad &  \text{otherwise,} \\
     \end{cases} \]
\[r_{collab}(x)=\begin{cases}
          x \quad &\text{if} \, \hat{\mu}_t(s^i_{t}) > 1/N \\
          -1 \quad &  \text{otherwise.} \\
     \end{cases}\]

 \paragraph{Evade shark in shoal.}  This is a similar idea to the task found in \citet{liu2025achievement}. Define the magnitude of the horizontal and vertical distances between $x,y$ at $t$ as $dist^h_t(x,y)$ and $dist^v_t(x,y)$ respectively. The state $s^i_{t}$ now consists of agent $i$'s position $x^i_{t}$ and a `shark's' position $\phi_{t}$. At each time step, the shark steps towards the most populated grid point according to the empirical mean-field distribution i.e. $x^*_t = \arg\max_{x\in\mathcal{S}} \hat{\mu}_t(x)$. A horizontal step is taken if $dist^h_t(\phi_{t},x^*_t)$ $\geq$ $dist^v_t(\phi_{t},x^*_t)$, otherwise a vertical step is taken. As well as featuring a non-stationary distribution, we add ‘common noise’ to the environment, with the shark moving in a random direction with probability 0.01. Such noise that affects the local states of all agents in the same way, making the evolution of the distribution stochastic, makes population-independent policies sub-optimal \citep{survey_learningMFGs}. Agents are rewarded more for being further from the shark, and also for clustering with other agents. The reward function is given by 
\begin{align*}
    R(s^i_{t},a^i_{t},\hat{\mu}_t) = \;& dist^h_t(\phi_{t},x^i_t) + \\ &dist^v_t(\phi_{t},x^i_t) + \text{norm}_{dist}(\log(\hat{\mu}_t(x^i_{t}))),
\end{align*}
where $\text{norm}_{dist}(\cdot)$ indicates that the final term is normalised to have the same maximum and minimum values as the total combined vertical and horizontal distance.
    
\paragraph{Push object to edge.}  This is similar to the task presented in \citet{cunha2024occlusion}. As before, define the magnitude of the horizontal and vertical distances between $x,y$ at $t$ as $dist^h_t(x,y)$ and $dist^v_t(x,y)$ respectively. The state $s^i_{t}$ consists of agent $i$'s position $x^i_{t}$ and the object's position $\phi_{t}$. The number of agents in the positions surrounding the object at time $t$ generates a probability field around the object, such that the object is most likely to move in the direction away from the side with the most agents. As such, if agents are equally distributed around the object, it will be equally likely to move in any direction, but if they coordinate on choosing the same side, they can `push' it in a certain direction. If Edges = \{edge$^1,\dots,$edge$^4\}$ are the grid edges, the closest edge to the object at time $t$ is given by edge$^*_t = \arg\min_{\text{edge}\in\text{Edges}}\left(\min(dist^h_t(\phi_{t},\text{edge}),dist^h_t(\phi_{t},\text{edge})\right).$ Agents are rewarded for how close they are to the object, and for how close the object is to the edge of the grid, i.e. they must coordinate on which side of the object from which to ‘push’ it, to ensure it moves to the grid's edge. The reward function is given by
\begin{align*}
    R(s^i_{t},a^i_{t},\hat{\mu}_t) = \;& dist^h_t(\phi_{t},x^i_t) + dist^v_t(\phi_{t},x^i_t) + \\ & dist^h_t(\phi_{t},\text{edge}^*_t) + dist^v_t(\phi_{t},\text{edge}^*_t).
\end{align*}

The above tasks are all coordination tasks, in that agents receive higher returns by aligning their policies and hence have an incentive to communicate their policy parameters even though the MFG framework is non-cooperative. Our fifth game is a non-coordination task: the reward function is not designed to give higher returns for more aligned policies. We include this game to demonstrate that even in such non-cooperative scenarios our networked architecture receives higher returns than both independent and central-agent alternatives, such that decentralised selfish agents may still have incentive to communicate. The fifth game is:

\paragraph{Disperse.} This is similar to the `exploration' tasks in \citet{scalable_deep}, \citet{wu2024populationaware} and other MFG works. In our version agents are rewarded for being located in more sparsely populated areas, but only if they are stationary. The reward function is given by $R(s^i_{t},a^i_{t},\hat{\mu}_t) = r_{stationary}(-\hat{\mu}_t(s^i_{t}))$, where \[r_{stationary}(x)=\begin{cases}
          x \quad &\text{if} \, a^i_{t} \text{ is `remain stationary'}  \\
          -1 \quad &  \text{otherwise.} \\
     \end{cases} \]

\subsection{Experimental metrics}\label{metrics_appendix}

To give as informative results as possible about both performance and proximity to the MFNE, we provide two metrics for each experiment. Both metrics are plotted with mean and standard deviation, computed over the ten trials (each with a random seed) of the system run in each setting.

\subsubsection{Exploitability}\label{exploitability}Works on MFGs most commonly use the \textit{exploitability} metric to evaluate how close a given policy $\pi$ is to a NE policy $\pi^*$ \citep{scalable_deep,continuous_fictitious_play,survey_learningMFGs,algumaei2023regularization,scaling_MFG,benjamin2024networked,wu2024populationaware}. The metric usually assumes that all agents are following the same policy $\pi$, and quantifies how much an agent can benefit by deviating from $\pi$ by measuring the difference between the return given by $\pi$ and that of a $BR$ policy with respect to the distribution generated by $\pi$: 

\begin{definition}[Exploitability of $\pi$]\label{exploitability_definition}
The exploitability ${E}x$ of policy $\pi$ is given by:
\[{E}x(\pi) = V(BR(I(\pi)),I(\pi)) - V(\pi,I(\pi)).\] 
\end{definition}

If $\pi$ has a large exploitability then an agent can significantly improve its return by deviating from $\pi$, meaning that $\pi$ is far from $\pi^*$, whereas an exploitability of 0 implies that $\pi = \pi^*$. Prior works conducting empirical testing have generally focused on the central-agent setting, so this classical definition, as well as most evaluations, only consider exploitability when all agents are following a single policy $\pi_k$. However, \citet{benjamin2024networked} notes that purely independent agents, as well as networked agents, may have divergent policies $\pi^i_k \neq \pi^j_k$ $\forall i,k \in 1,\dots,N$, as in our own setting. We therefore are interested in the `exploitability' of the population's joint policy $\boldsymbol\pi$ := ($\pi^1,\dots,\pi^N$) $\in \Pi^N$. 

Since we do not have access to the exact $BR$ policy as in some related works \citep{scalable_deep,wu2024populationaware}, we must instead approximate the exploitability, similarly to \citet{flock,benjamin2024networked}. We freeze the policy of all agents apart from a deviating agent, for which we store its current policy and then conduct 50 $k$ loops of policy improvement. To approximate the expectations in Def. \ref{exploitability_definition}, we take the best return of the deviating agent across 10 additional $k$ loops, as well as the mean of all the other agents' returns across these same 10 loops. (While the policies of all non-deviating agents is $\pi_k$ in the central-agent case, if the non-deviating agents do not share a single policy, then this method is in fact approximating the exploitability of their joint policy $\boldsymbol\pi^{-d}_k$, where $d$ is the deviating agent.) We then revert the agent back to its stored policy, before learning continues for all agents as per the main algorithm. Due to the expensive computations required for this metric, we evaluate it every second $k$ iteration of the main algorithm for Figs. \ref{agree100}, \ref{cluster100}, \ref{agree50},  \ref{cluster50}   and \ref{diffuse100}, and every fourth iteration for the population-dependent experiments. 

\textit{The exploitability metric has a number of limitations in our setting}. In coordination games (the setting for all tasks apart from the `disperse' game), agents benefit by following the same behaviour as others, and so a deviating agent generally stands to gain less from switching to a $BR$ policy than it might in the non-coordination games on which many other works focus. For example, the return of a best-responding agent in the `cluster’ task still depends on the extent to which other agents coordinate on where to cluster, meaning it cannot significantly increase its return by deviating from a badly clustering policy. This means that the downward trajectory of the exploitability metric is less clear in our plots than in other works that do not focus on coordination games. This is likely why the difference between the approximated exploitability of the independent agents and the other populations is clearer in the non-coordination `disperse' task in Fig. \ref{diffuse100} than in the other tasks.

Moreover, our approximation takes place via MOMD policy improvement steps (as in the main algorithm) for an independent, deviating agent while the policies of the rest of the population are frozen. As such, the quality of our approximation is limited by the number of policy-improvement/expectation-estimation rounds, which must be restricted for the sake of the running speed of the experiments. Moreover, since one of the findings of our paper is that independent-learning agents increase their returns significantly slower (if at all) than networked or central-agent populations, it is arguably unsurprising that approximating the $BR$ by an independently deviating agent sometimes gives an unclear and noisy metric. This includes the exploitability going below zero, which should not be possible if the policies and distributions are computed exactly. Given the limitations presented by approximating exploitability, we also provide a second metric to indicate the progress of learning, as in \citet{benjamin2024networked}.

\subsubsection{Average discounted return}

We record the average discounted return of the agents' policies $\pi_k^i$ during the $M$ iterations. This allows us to compare how quickly agents are learning to increase their returns, even when exploitability gives us limited ability to distinguish between the desirability of the MFNEs to which populations converge. We can observe that settings that converge to similar exploitability values may not have similar average agent returns, suggesting that some algorithms are better than others at finding not just NE, but preferable NE (for reasons related to why networked populations can increase their returns faster as per Sec. \ref{theoretical_results}). See for example Figs. \ref{agree100} and \ref{agree50}, where the networked populations converge to similar exploitability as the independent and central-agent populations, but receive higher average returns.

\subsection{Hyperparameters}\label{hyperparameters_section}

See Table \ref{Hyperparameters} for our hyperparameter choices. We can group our hyperparameters into those controlling the size of the experiment, those controlling the size of the Q-network, those controlling the number of iterations of each loop in the algorithms and those affecting the learning/policy updates or policy adoption.

In our experiments we generally want to demonstrate that our communication-based algorithms outperform the central-agent and independent architectures by allowing policies that are estimated to be better performing to proliferate through the population, such that convergence occurs within fewer iterations and computationally faster, even when the Q-function is poorly approximated and/or the mean-field is poorly estimated, as is likely to be the case in real-world scenarios. Moreover we want to show that there is a benefit even to a small amount of communication, so that communication rounds themselves do not excessively add to time complexity. As such, we generally select hyperparameters at the lowest end of those we tested during development, to show that our algorithms are particularly successful given what might otherwise be considered `undesirable' hyperparameter choices. 

\begin{table*}
    \centering
    \caption{Hyperparameters}\label{Hyperparameters}
    \hspace*{-0.cm}
    \begin{tabular}{p{1.85cm}|p{1.5cm}|p{13.cm}}\\ 
  Hyperparam.     & Value     & Comment \\ \midrule
  Trials & 10  & We run 10 trials with different random seeds for each experiment. {We plot the mean and standard deviation for each metric across the trials.}   \\ \hline
Gridsize     & $10$$\times$$10$/ $50$$\times$$50$/ $100$$\times$$100$ & Experiments on large state spaces are run on $50\times 50$ and $100\times 100$ grids (Figs. \ref{agree100}-\ref{diffuse100}), while experiments with population-dependent policies are run on the $10\times 10$ grid (Figs. \ref{push,estimate}, \ref{evade,estimate},  \ref{push,global} and \ref{evade,global}). \\ \hline

    Population     & 500 & We chose 500 for our demonstrations to show that our algorithm can handle large populations, indeed often larger than those demonstrated in other mean-field works, especially for grid-world environments, while also being feasible to simulate wrt. time and computation constraints \citep{subjective_equilibria,approximately_entropy,cui2023multiagent,general_framework,rl_stationary,mfrl_yang,pomfrl,subramanian2022multi,DMFG,wu2024populationaware,benjamin2024networked}. \\ \hline
    Number of neurons in input layer & cf. comment & The agent's position is represented by two concatenated one-hot vectors indicating the agent's row and column. An additional two such vectors are added for the shark's/object's position in the `evade' and 'push object' tasks. For population-dependent policies, the mean-field distribution is a flattened vector of the same size as the grid. As such, the input size in the `evade' and 'push object' tasks is $[(4 \times \mathrm{dimension}) + (\mathrm{dimension}^2)]$; in the other settings it is $[2 \times \mathrm{dimension}]$.
    \\ \hline
    Neurons per hidden layer & cf. comment & We draw inspiration from common rules of thumb when selecting the number of neurons in hidden layers, e.g. it should be between the number of input neurons and output neurons / it should be 2/3 the size of the input layer plus the size of the output layer / it should be a power of 2 for computational efficiency. Using these rules of thumb as rough heuristics, we select the number of neurons per hidden layer by rounding the size of the input layer down to the nearest power of 2. The layers are all fully connected.
    \\ \hline
    Hidden layers & 2 & We experimented with 2 and 3 hidden layers in the Q-networks. While 3 hidden layers gave similar or slighly better performance, we selected 2 for increased computational speed for conducting our experiments.
     \\ \hline
     Activation function &   ReLU     & This is a common choice in deep RL.

     \\ \hline
    $K$     & 100       & $K$ is chosen to be large enough to see at least one of the metrics converging.\\ \hline 
    $M$     & 50       &  We tested $M$ in \{50,100\} and found that the lower value was sufficient to achieve convergence while minimising training time. It may be possible to converge with even smaller choices of $M$. 
    \\ \hline 
    
     $L$     &   50    &  We tested $L$ in \{50,100\} and found that the lower value was sufficient to achieve convergence while minimising training time. It may be possible to converge with even smaller choices of $L$.
     \\ \hline 
    $E$     &   20    & We tested $E$ in \{20,50,100\}, and choose the lowest value to show the benefit to convergence even from very few evaluation steps. It may be possible to reduce this value further and still achieve similar results. 
     \\\hline $C_p$     &   1   & As in \citet{benjamin2024networked}, we choose this value to show the convergence benefits brought by even a single communication round, even in networks that may have limited connectivity; higher choices are likely to have even better performance. 
    \\\hline 
    $C_e$     &   1   & Similar to $C_p$, we choose this value to show the ability of our algorithm to appropriately estimate the mean field even with only a single communication round, even in networks that may have limited connectivity.
    \\\hline $\gamma$     & 0.9 & Standard choice across RL literature.     \\\hline 
    $\tau_q$     &   0.03   & We tested $\tau_q$ in \{0.01,0.02,0.03,0.04,0.05\}, as well as linearly decreasing $\tau_q$ from 0.05 $\rightarrow$ 0 as $k$ increases. However, only 0.03 gave stable increase in return. Note that this is the value also chosen in \citet{NEURIPS2020_2c6a0bae}. \\\hline 

    $|B|$   &  32    & This is a common choice of batch size that trades off noisy updates and computational efficiency. \\\hline 

    $cl$     &   -1   & We use the same value as in \citet{NEURIPS2020_2c6a0bae}.\\\hline 

    $\nu$     &  $L-1$    & We tested $\nu$ in $\{1,4,20,L-1\}$. We found that in our setting, updating $\theta'\leftarrow\theta$ once per $k$ iteration s.t. $\theta'_{k+1,l}=\theta_{k,l}$ $\forall l$ gave sufficient learning that was similar to the other potential choices of $\nu$, so we do this for simplicity, rather than arbitrarily choosing a frequency to update $\theta'$ during each $k$ loop. Setting the target to be the policy from the previous iteration is similar to the method in \citet{scalable_deep}. Whilst \citet{wu2024populationaware} updates the target within the $L$ loops for stability, we do not find this to be a problem in our experiments. \\\hline

    Optimiser     &   Adam   & As in \citet{NEURIPS2020_2c6a0bae}, we use the Adam optimiser with initial learning rate 0.01. \\\hline

    $\tau^{comm}_k$    &   cf. comment    &  $\tau^{comm}_k$ increases linearly from 0.001 to 1 across the $K$ iterations. This is a simplification of the annealing scheme used in \citet{benjamin2024networked}. Further optimising the annealing process may lead to better results.
    \\ \hline \hline 
    \end{tabular}    
    \end{table*}

\subsection{Additional experiments and plots}\label{additional_experiments}

In the non-coordination `disperse' task in Fig. \ref{diffuse100}, networked agents significantly outperform independent and central-agent populations in terms of average return. They also significantly outperform independent agents in terms of exploitability, and perhaps also central-agent populations though not significantly so. The fact that this happens in this non-coordination, competitive game shows that agents may have an incentive to communicate with each other even if they are self-interested. Indeed the agents learning independently do not appear to improve their returns at all, despite this being the paradigm that might be expected to perform best in a non-cooperative setting.

\section{Further discussion on the necessity of population-dependent policies}\label{Further_discussion_non_stationary_equilibria}

There are numerous reasons why agents may benefit from being able to respond to the current distribution. These include when:

\begin{itemize}
    \item There may in fact be any initial distribution \citep{perrin2022generalization,wu2024populationaware,survey_learningMFGs}.
    \item The solved MFG has a non-stationary equilibrium, e.g. if there is a finite time horizon meaning that actions become more urgent as time passes such that policies (and therefore also the distribution) are non-stationary \citep{scalable_deep,wu2024populationaware}.
    \item The distribution is subject to `common noise'. This is noise that affects the local states of all agents in the same way, making the evolution of the distribution stochastic. In such cases, even if the agent knows the policy used by all other agents, it cannot perfectly predict the evolution of the mean-field distribution, making population-independent policies suboptimal \citep{carmona_common,cardaliaguet2015masterequationconvergenceproblem,survey_learningMFGs,li2024incomplete,becherer2024common}.
    \item The distribution deviates from the equilibrium for some other reason, such that agents need to be able to generalise their response to other (possibly previously unseen) distributions \citep{survey_learningMFGs}. 
\end{itemize}

\section{Future work}\label{future_work}

Our work follows the gold standard in MFGs by presenting experiments on grid world toy environments, albeit we show our algorithms are able to handle much larger and more complex games than prior work. Nevertheless future work lies in moving from these environments to real-world settings. In Sec. \ref{theoretical_results} we give theoretical results showing that our networked algorithm can outperform a central-agent alternative. We leave more general analysis, such as proof of convergence and sample guarantees in the function approximation setting, for future work.

Since the MFG setting is technically non-cooperative, we have preempted objections that agents would not have incentive to communicate their policies by focusing on coordination games, i.e. where agents seek to maximise only their individual returns, but receive higher rewards when they follow the same strategy as other agents. In this case they stand to benefit by exchanging their policies with others. Future work lies in extending our networked communication algorithms to mean-field control, the cooperative counterpart to MFGs, where agents would have incentive to communicate across different types of game. Nevertheless, in real-world settings, the communication network could still be vulnerable to malfunctioning agents or adversarial actors poisoning the equilibrium by broadcasting untrue policy information \citep{agrawal2024impact}, or equally to unreliable communication channels. It is outside the scope of this paper to analyse how much false information would have to be broadcast by how many agents to affect the equilibrium, but real-world applications may need to compute this and prevent it. Future research to mitigate this risk might build on work such as \citet{piazza2024power}, where `power regularisation' of information flow is proposed to limit the adverse effects of communication by misaligned agents. 

Similarly, for the non-tabular MFG setting in \citet{benjamin2024networked}, the work considers the robustness of the population to scenarios when agents fail to update their policies by the time they are required to communicate. Similar studies in our setting are the matter of future work. Likewise, our experiments only use one round of communication within each iteration to show the benefit that even this can have on learning speed over the independent case, but it may be informative to study the effect of communication costs further.

Alg. \ref{alg:mean_field_estimation_specific} assumes that if a state $s'$ is connected to $s$ on the visibility graph $\mathcal{G}^{vis}_{t}$, an agent in $s$ is able to \textit{accurately} count all the agents in $s'$, i.e. it either counts the exact total or cannot observe the state at all. We assume this for simplicity but it is not inherently the case, since a real-world agent may have only noisy observations even of others located nearby, due to imperfect sensors. We suggest two ways to deal with this. Firstly, if agents share unique IDs as in Alg. \ref{alg:mean_field_estimation_general}, then when communicating their vectors of collected IDs with each other via $\mathcal{G}^{comm}_{t}$, agents would gain the most accurate picture possible of all the agents that have been observed in a given state. However, as we note above, there are various reasons why sharing IDs might be undesirable, including privacy and scalability. If instead only counts are taken, and if the noise on each agents' count is assumed to be independent and, for example, subject to a Gaussian distribution, the algorithm can easily be updated such that communicating agents compute averages of their local and received counts to improve their accuracy, rather than simply using communication to fill in counts for previously unseen states. (Note that we can also consider the original case without noise to involve averaging, since averaging identical values equates to using the original value). 
Since the algorithm is intended to aid in local estimation of the mean-field distribution, which is inherently approximate due to the uniform method for distributing the uncounted agents, we are not concerned with reaching exact consensus between agents on the communicated counts, so we do not require repeated averaging to ensure asymptotic convergence. 

We may wish to consider more sophisticated methods for distributing the uncounted agents across states, in place of the current uniform distribution. Such choices may be domain-specific based on knowledge of a particular environment. For example, one might use the counts to perform Bayesian updates on a specific prior, where this prior may relate to the estimated mean-field distribution at the previous time step $t-1$. If agents seek to learn to predict the \textit{evolution} of the mean field based on their own policy or by learning a model, the Bayesian prior may also be based on forward prediction from the estimated mean-field distribution at $t-1$. Future work lies in conducting experiments in all of these more general settings. 

\citet{perrin2022generalization} notes that in grid-world settings such as those in our experiments, passing the (estimated or true global) mean-field distribution as a flat vector to the Q-network ignores the geometric structure of the problem. They therefore propose to create an embedding of the distribution by first passing the vector to a convolutional neural network, essentially treating the categorical distribution as an image. This technique is also followed in \citet{wu2024populationaware} (for their additional experiments, but not in the main body of their paper). As future work, we can test whether such a method improves the performance of our algorithms.


\end{document}